\documentclass[journal12pt]{IEEEtran}
%
\usepackage{graphicx}
\usepackage{mathptmx}
\usepackage{amsmath}
\usepackage{amssymb}
\usepackage{amsthm}

\newtheorem{prop}{Proposition}

\theoremstyle{definition}

\usepackage{kantlipsum} 
\usepackage{mwe} 
\usepackage{caption,subcaption}
\usepackage{amssymb}
\usepackage{graphicx}
\usepackage{bm}
\usepackage{commath}
\usepackage{mathtools}
\usepackage{amsmath}
\usepackage{mathtools, nccmath}
\usepackage{mathrsfs}
\usepackage{amsfonts}
\usepackage{siunitx}
\usepackage{textcomp}

\newtheorem{remark}{Remark}

\usepackage{makecell}
\usepackage{boldline}
\usepackage{array,booktabs,arydshln,xcolor}
\usepackage{cuted, nccmath}
\DeclareMathOperator{\atan}{atan}
\usepackage{lipsum}
\usepackage{stfloats}
\let\OLDthebibliography\thebibliography
\renewcommand\thebibliography[1]{
  \OLDthebibliography{#1}
  \setlength{\parskip}{0pt}
  \setlength{\itemsep}{0pt plus 0.3ex}
}

%

%

%
\ifCLASSINFOpdf
\else
\fi
\hyphenation{op-tical net-works semi-conduc-tor}

\begin{document}
%
\title{A Swash Mass Unmanned Aerial Vehicle:\\ Design, Modeling and Control}
%
%
%
\author{Andrea~M. Tonello
        and~Babak~Salamat
\thanks{Andrea M. Tonello is with the Alpen-Adria-Universit\"{a}t Klagenfurt, Institute of Networked and Embedded Systems, 9020, Austria, e-mail: andrea.tonello@aau.at.}
\thanks{Babak Salamat is with the Alpen-Adria-Universit\"{a}t Klagenfurt, Institute of Networked and Embedded Systems, 9020, Austria, e-mail: babaksa@edu.aau.at.}}

\maketitle

\begin{abstract}
In this paper, a new unmanned aerial vehicle (UAV) structure, referred to as swash mass UAV, is presented. It consists of a double blade coaxial shaft rotor and four swash masses that allow changing the orientation and maneuvering the UAV. The dynamical system model is derived from the Newton\textquotesingle s law framework. The rotational behavior of the UAV is discussed as a function of the design parameters. Given the uniqueness and the form of the obtained non-linear dynamical system model, a back-stepping control mechanism is proposed. It is obtained following the Lyapunov's control approach in each iteration step. Numerical results show that the swashed mass UAV can be maneuvered  with the proposed control algorithm so that linear and aggressive trajectories can be accurately tracked.
\end{abstract}

\begin{IEEEkeywords}
Aerospace, Unmanned aerial vehicle, dynamical system model, non-linear control, back-stepping control, trajectory tracking.
\end{IEEEkeywords}

%
\IEEEpeerreviewmaketitle

\section{Introduction}
%
%
%
%

\IEEEPARstart{O}{ver} the past few decades unmanned aerial vehicles (UAVs) have received growing attention for their versatility in different application domains. Initially conceived for military applications, nowadays they find deployment in surveillance systems, aerial photography, traffic control, and agriculture. In addition, their navigation autonomy, compact size, low environmental impact when electrically propelled, drive the development of new commercial business models for logistics, e.g., parcel delivery, and future urban transportation services, e.g., aerial taxi.
To fulfill the diverse requirements of the cited applications, the design of new mechanical structures is a quite vivid research activity. Such structures, essentially, can be divided into three significant categories: fixed-wing (FW) UAVs, rotary-wing (RE) UAVs and hybrid-layout (HL) UAVs \cite{austin2010unmanned}. The fixed-wing UAVs have fixed-wings appropriately shaped and positioned to produce lift from the forward movement of the vehicle. FW UAVs reach high-speeds and have the ability to fly over long distances. However, they require horizontal take-off and landing procedures. On the other hand, rotary-wing UAVs offer vertical take-off and landing capabilities as well as the ability to hover in a motionless spot. Quadrotor helicopters are a popular example of RW UAVs. The rotor disks are aligned in a single plate and the quadrotor's thrust vector is constrained vertically to the plate. Indeed, rotary-wing UAVs are not aerodynamically optimized as fixed-wing UAVs and reach lower speeds and flight distance. To increase their limited maneuverability \cite{7487497,6504725}, a number of modified configurations have been developed. For instance, the authors in \cite{6907516, Long2013, 6868215} propose a multi-rotor helicopter with the ability to tilt the propellers so that the thrust vector direction can be changed. To combine the advantages of both FW and RW UAVs, hybrid-layout UAVs have been conceived and they deploy both wings and thrust rotors. However, they are mechanically complex due to the rotor inclination apparatus \cite{Garcia:2010:MCM:1965425}.
In all UAV systems, a key component is the automatic control mechanism. Different control methodologies have been studied for the trajectory tracking problem. A linear quadratic regulator (LQR) approach has been considered  in \cite{8396552, 7784809, 8395008} for trajectory tracking of a quadrotor UAV in the presence of external disturbances. Bounded tracking control was analyzed in \cite{6213080} for a wide class of reference trajectories. Stochastic feedback control was proposed for a fixed-wing UAV in \cite{6780633, 8469075}. Interconnection and damping assignment passivity based control (IDA-PBC) has been applied in \cite{1556727, 1024334} for the vertical takeoff and landing of a degree one under actuated aircraft with strong input coupling. Finally, since the dynamical system model is often nonlinear and with a number of coupled subsystems, non-linear back-stepping control mechanisms have been considered valuable. For instance, Lyapunov based back-stepping control has been investigated for position tracking of a quadrotor UAV in \cite{4287130, 8390703}.

\begin{figure}[t]
\includegraphics[scale=0.8]{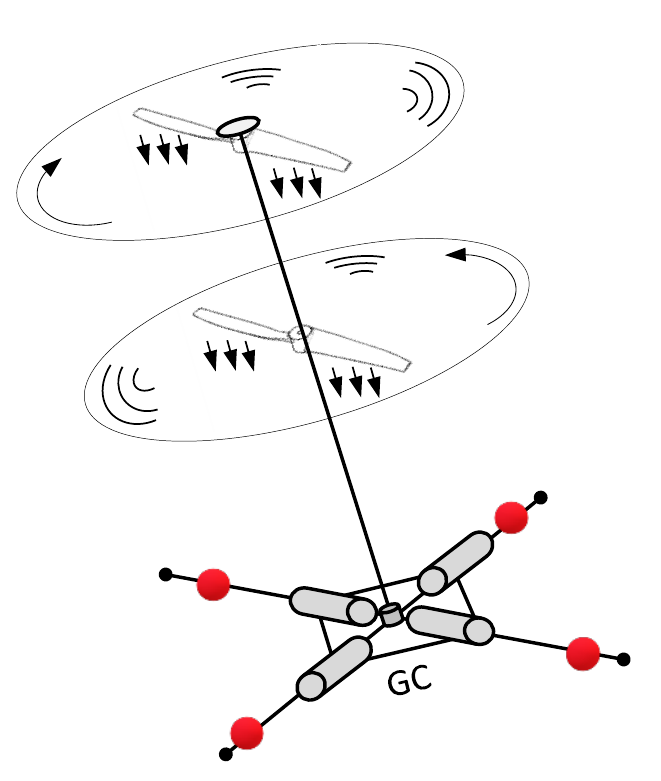}
\centering
\vspace{-0.1em}
\caption{Swash mass unmanned aerial vehicle structure. }
\vspace{-1.0em}
\label{Fig_UAV}
\end{figure}

In this paper, the main contribution is twofold. We first propose and model a novel unmanned aerial vehicle structure (Fig.~\ref{Fig_UAV}).  Then, we address the control mechanism for trajectory tracking.  In reference to the UAV structure, the basic idea is

\begin{center}
\begin{figure*}[h]
\includegraphics[width=6.4in,height=4.4in,clip,keepaspectratio]{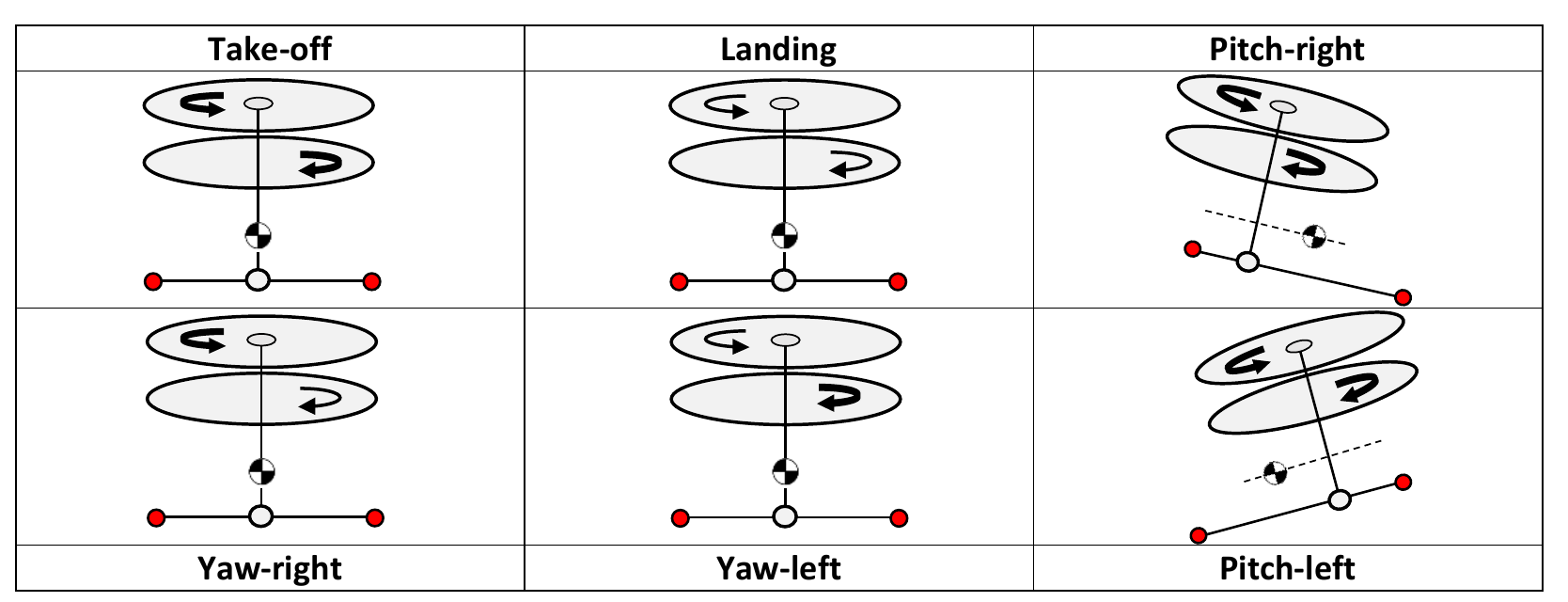}
\centering
\vspace{-0.1em}
\caption{Take-off, landing and yaw motion of the swash mass UAV.  }
\vspace{-1.0em}
\label{Fig_Basic}
\centering
\end{figure*}
\end{center}

to deploy a double blade coaxial shaft rotor and to maneuver the helicopter through the linear movement of four masses positioned on the main body plate. We refer to it as the swashed mass helicopter. The rotors provide a drag thrust, while the inertial masses (through the gravitational forces) induce a certain orientation so that to attain a certain roll, pitch and yaw. By controlling the rotors speed and the swash masses displacement it is possible (as it will be shown) to control the UAV to follow certain trajectories, as well as to provide VTOL and hovering ability. In contrast, traditional helicopters, for instance those with body and tail rotor wings, deploy a collective pitch swash-plate \cite{watkinson2003art} that is capable to change the main blade pitch angle, and then they adapt the main and tail blade speeds to induce certain maneuvers. However, the swash plate mechanisms is mechanically complex and of difficult realization in small UAVs.

In more detail, the specific contributions of this paper are:
\begin{itemize}
\item the description of the swash mass helicopter structure;
\item the derivation of the dynamical system model;
\item the effect of sizing on control input response;
\item the development of a control strategy using a non linear back-stepping control methodology (\cite{zhou2008adaptive}, Chapter 2) to both stabilize the rotor craft and track a certain trajectory.
\end{itemize}

The derived back-stepping control mechanism nicely applies to the proposed novel UAV structure since the dynamic system model can be divided into a fully actuated subsystem and a coupled underacted subsystem. Furthermore, the dynamical system model includes the derivative of the control inputs, as a result of a time variant inertia matrix, which makes the control design challenging. To overcome this challenge and derive the control law, we propose a simplified state space dynamic model by approximating the inertia matrix with a constant value.

The reminder of this paper is organized as follows. The basic characteristics of the UAV structure are given in Section II. The mathematical formulation of the system dynamics is presented in Section III. The equilibrium analysis as well as design guidelines (sizing) of the structure are given in Section IV. The control problem is addressed in Section V. Several simulation results are presented in Section VI. The conclusions then follow. \par
\textbf{Notation:} Vectors and matrices are denoted with bold letters. Unless stated otherwise, all vectors in this paper are column vectors. The vector cross product is denoted with $\times$. The state variables are a function of time, e.g., $\bm{x} = \bm{x}(t)$. The first and second derivative with respect to time of a state space variable $x$ are denoted respectively with $\dot{x}$, and $\ddot{x}$.

\begin{center}
\begin{figure*}[h]
\includegraphics[width=6.5in,height=4in,clip,keepaspectratio]{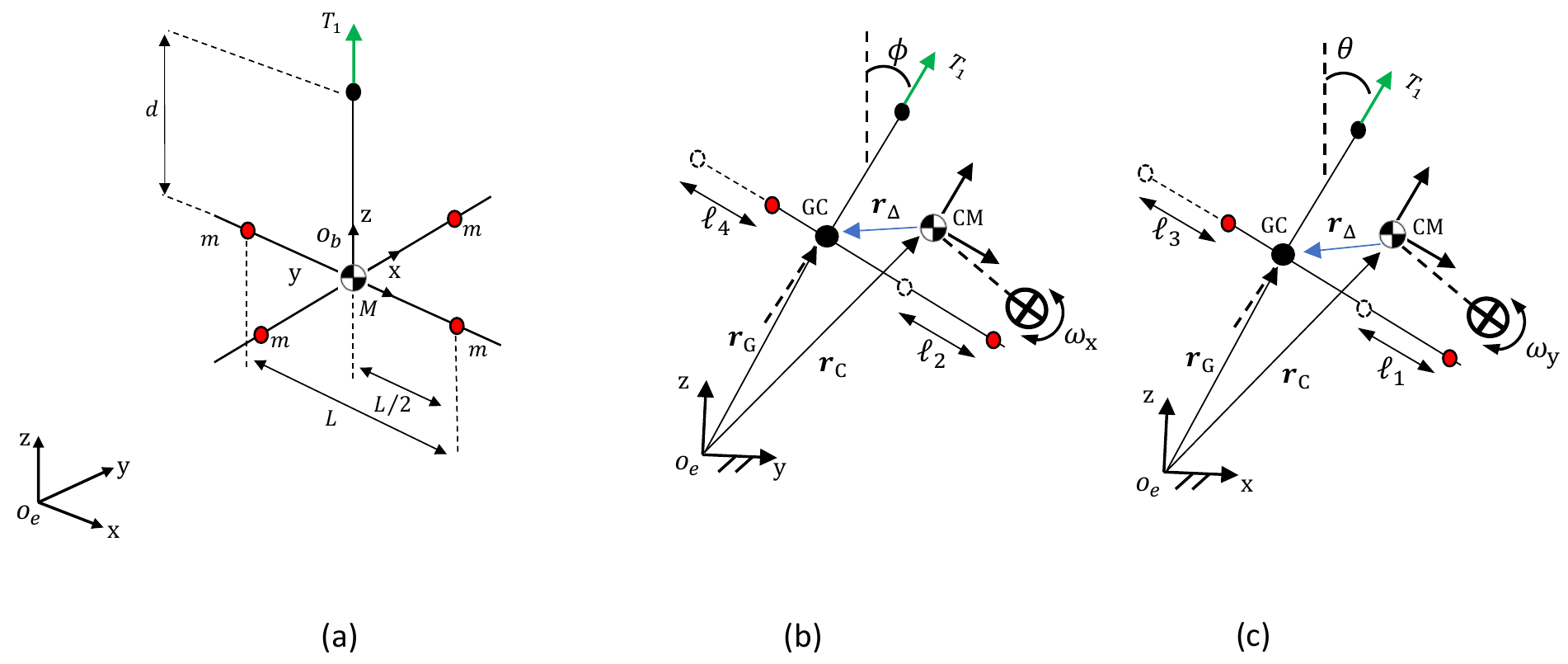}
\centering
\vspace{-0.1em}
\caption{Free diagram of the swash mass UAV. (a) Reference frames $O_{e}$ and $O_{b}$. The connection between the frames is denoted with radius vectors $\bm{r}_{G}$, $\bm{r}_{c}$ and $\bm{r}_{\Delta}$. (b) The planner UAV equipped with two swash masses in $y-z$ plane. Main body (GC), swash masses $2m$ and the center of mass (CM). Showing only pitch dynamics for clarity. (c) The planner UAV equipped with two swash masses in $x-z$ plane.}
\vspace{-1.0em}
\label{Frames}
\centering
\end{figure*}
\end{center}

\section{Basic elements of the swash mass uav}
The proposed UAV deploys a coaxial double blade rotor with the rotor connected to the main helicopter body via a rigid shaft. To steer the helicopter, four masses are displaced on an orthogonal plane w.r.t. to the rotor shaft and can moved with linear cross shaft servos. Assuming the rotor and blades to be concentrated in one point in the rotor shaft edge, we refer to such a point as the rotor center (RC). The intersection of the rotor shaft and the swash masses plane is referred to as geometrical center (GC) of the UAV. The blades rotation induces a thrust aligned with the rotor shaft and the swash masses shifts the center of mass (CM) of the UAV on an orthogonal plane to the thrust vector, i.e.,. it tilts the UAV body (Fig.~\ref{Fig_Basic}). A yaw movement is generated by changing the relative speed of the two blades since a drag torque imbalance is generated, while roll and pitch are regulated by displacing the swash masses asymmetrically w.r.t. the GC.

\section{Dynamical system model}
In this section, firstly, the coordinate systems will be defined. Then, the full non-linear dynamical system model describing the UAV's translation and rotation will be derived using the Newton's framework.

\subsection{Coordinate Systems}
To describe the behaviour (translation and rotation) of the UAV, three coordinate systems can be defined: the inertial reference frame, the center of mass (CM) frame aligned with the inertial reference frame and attached to the CM and the body-fixed reference frame. Note that we express angular momentum and angular velocities w.r.t. the CM frame.

The body-fixed reference frame $O_{b} = \pmb{\{} x_{b}, y_{b}, z_{b} \pmb{\}}$ is centered in the GC (see Fig.~\ref{Frames}). The $z_{b}$ axis has the orientation of the rotor shaft and it is orthogonal to the swash masses plane. The two mass cross shafts are aligned with the $x_{b}$ and $y_{b}$ axes.

The inertial reference frame is $O_{e} = \pmb{\{} x, y, z \pmb{\}}$ and it is fixed on the earth ground with the gravity vector pointing towards the negative $z$ direction\footnote{We assume that the Earth-fixed frame can be considered as an inertial reference frame since the effect of Earth's rotation on an aerial object moving from the North pole towards the equator can be neglected \cite{etkin2012dynamics}.}.

We also use the following notation for radius and velocity vectors: $\bm{r}_{C}$ denotes radius vector from the inertial reference frame $O_{e}$ to the CM frame while $(\bm{r}_{\Delta})_{b}$ denotes the radius vector from the CM frame to the body-fixed reference frame $O_{b}$ expressed in the body-fixed reference frame. $\bm{r}_{G}$ denotes the radius vector from the inertial reference frame $O_{e}$ to the GC of the UAV.

Since the body reference frame is fixed to the unmanned aerial vehicle structure, it translates and rotates with the body \cite{dynamic}. Its rotation w.r.t. the CM frame is given by the Euler angles $(\phi, \theta, \psi)$.

Finally, we can relate the body-fixed reference frame to the inertial reference frame by the rotation matrix $\bm{R}^{I}_{B}$ which passes a vector from the former frame to the latter \cite{etkin2012dynamics}. Its rotation w.r.t. the inertial reference frame (shifted so that it has the same origin of the body reference frame) is given by the rotation matrix
\begin{equation}
\label{Rotation}
\bm{R}^{I}_{B} = \begin{bmatrix}
                   c_{\theta} c_{\psi} & s_{\phi}s_{\theta}c_{\psi}-c_{\phi}s_{\psi} & c_{\phi}s_{\theta}c_{\psi} + s_{\phi}s_{\psi} \\[2pt]
                   c_{\theta} s_{\psi} & s_{\phi}s_{\theta}s_{\psi}-c_{\phi}c_{\psi} & c_{\phi}s_{\theta}s_{\psi} - s_{\phi}c_{\psi} \\[2pt]
                   -s_{\theta} & s_{\phi}c_{\theta} & c_{\phi}c_{\theta}
                 \end{bmatrix}.
\end{equation}
The terms $s.$ and $c.$ represent the sine and cosine functions of the argument in the subscript, respectively.
\subsection{External Forces and Moments}
We assume that the UAV has a mass $M$, including the four swash masses each with identical value $m$, which is concentrated in the CM of the UAV. The UAV is subjected to translational and rotational forces in a constant gravitational field with the gravity $\bm{g}$ that is aligned to the inertial reference frame z-axis. Therefore, the overall gravitational force expressed in the inertial reference frame can be written as follows
\begin{equation}
\label{gravity_b}
\bm{F}_{g,I} = \begin{bmatrix}
                         0 \\
                         0 \\
                         -Mg
                       \end{bmatrix}.
\end{equation}
The thrust is the main force generated by the rotation of the two blades. Overall, the net thrust $T_{1}$ (which is aligned with the rotor shaft) can be written in the shifted inertial reference frame (with origin RC) as follows
\begin{equation}
\label{Force}
\bm{F}_{r,I} = \bm{R}^{I}_{B} \bm{F}_{r,B} = \bm{R}^{I}_{B} \begin{bmatrix}
            0 \\
            0 \\
            T_{1}
          \end{bmatrix},
\end{equation}
where $T_{1}$ (which is directly proportional to the upper and lower rotor aerodynamic coefficient $\gamma_{1}$ and to the rotational speeds $\Omega_{1}$ and $\Omega_{2}$ \cite{watkinson2003art}) equals $T_{1} = \gamma_{1}(\Omega^{2}_{1} + \Omega^{2}_{2})$.

The UAV is tilted by steering the swash masses since they generate a moment vector about an axis passing through the CM. Such a moment vector can be expressed in the shifted inertial frame (CM frame) as follows
\begin{align}\nonumber
\begin{split}
\bm{M}_{C} = \bm{R}^{I}_{B}(\bm{r}_{\Delta})_{b} \times \bm{R}^{I}_{B}\begin{bmatrix}
                                   0 \\
                                   0 \\
                                   T_{1}
                                 \end{bmatrix} + \bm{R}^{I}_{B}\begin{bmatrix}
                 0 \\[2pt]
                0 \\[2pt]
                 \gamma_{2}(\Omega^{2}_{1} - \Omega^{2}_{2})
               \end{bmatrix}
               \end{split}\\
               \begin{split}\label{Mc}
               = \bm{R}^{I}_{B}\bigg(     (\bm{r}_{\Delta})_{b} \times \begin{bmatrix}
                                   0 \\
                                   0 \\
                                   T_{1}
                                 \end{bmatrix}     \bigg) + \bm{R}^{I}_{B}\begin{bmatrix}
                 0 \\[2pt]
                0 \\[2pt]
                 \gamma_{2}(\Omega^{2}_{1} - \Omega^{2}_{2})
               \end{bmatrix}.
\end{split}
\end{align}
The yaw moment $ M_{\psi} = \gamma_{2}(\Omega^{2}_{1} - \Omega^{2}_{2})$ generated by the rotating blades is directly proportional to the corresponding aerodynamic coefficient $\gamma_{2}$ and rotation speeds difference \cite{watkinson2003art}.

\subsection{Complete Non-linear Dynamical System Model}
The translational dynamics of the CM w.r.t. the inertial reference frame can be obtained by applying the first cardinal Newton's equation \cite{etkin2012dynamics, Prof_Tonello} and exploiting (\ref{gravity_b}) and (\ref{Force}). The model becomes:
\begin{equation}
\label{Translational_e}
     \begin{bmatrix}
        \ddot{x} \\
        \ddot{y} \\
        \ddot{z}
      \end{bmatrix}_{C} = \dfrac{1}{M} \bm{F}_{g,I} + \dfrac{1}{M}\bm{F}_{r,I}.
\end{equation}
Our objective is to derive the translational dynamics of the geometric center (GC) of the helicopter observed in the inertial reference frame. To do so, we consider a shift (see Fig.~\ref{Frames}\hspace{0.2em}(b-c)), and write
\begin{equation}\label{shift}
\bm{r}_{G} = \bm{r}_{C} + \bm{r}_{\Delta}.
\end{equation}
We use the fact that $\bm{r}_{\Delta} = \bm{R}^{I}_{B}(\bm{r}_{\Delta})_{b}$, where $\bm{R}^{I}_{B}$ and $(\bm{r}_{\Delta})_{b}$ are the rotation matrix from the body-fixed frame to the CM frame and the radius vector from the CM frame to the body-fixed reference frame expressed in the body-fixed frame, respectively. Taking the time derivative of both sides of (\ref{shift}) and using the identity $\bm{\dot{R}}^{I}_{B}(\bm{r}_{\Delta})_{b} = \bm{R}^{I}_{B}(\bm{\omega} \times (\bm{\dot{r}}_{\Delta})_{b})$ \cite{goldstein2002classical}, we get
\begin{equation}\label{dshift}
\bm{\dot{r}}_{\Delta} = \bm{R}^{I}_{B}\bigg[  (\bm{\dot{r}}_{\Delta})_{b}  + \bm{\omega} \times (\bm{r}_{\Delta})_{b}    \bigg],
\end{equation}
where  $\bm{\omega} = [\omega_{x} \hspace{0.4em} \omega_{y}\hspace{0.4em}  \omega_{z}]^T$ is the angular velocity vector. Taking one more time derivative we obtain
\begin{equation}\label{ddshift}
\bm{\ddot{r}}_{\Delta} = \bm{R}^{I}_{B}\bigg[  (\bm{\ddot{r}}_{\Delta})_{b}  +   2\bm{\omega} \times (\bm{\dot{r}}_{\Delta})_{b} + \bm{\dot{\omega}} \times (\bm{r}_{\Delta})_{b} + \bm{\omega} \times \big( \bm{\omega} \times  (\bm{r}_{\Delta})_{b} \big)          \bigg].
\end{equation}
Substituting (\ref{ddshift}) in (\ref{shift}), we obtain the translational dynamics of the GC w.r.t. the inertial reference frame. To make our notation simple, we define
\begin{equation}\label{Simple notation}
\begin{bmatrix}
  x \\
  y \\
  z
\end{bmatrix}_{GC} \triangleq \begin{bmatrix}
                                x \\
                                y \\
                                z
                              \end{bmatrix}.
\end{equation}
Therefore, we obtain
\begin{align}\nonumber
\begin{split}
 \begin{bmatrix}
        \ddot{x} \\
        \ddot{y} \\
        \ddot{z}
      \end{bmatrix} = \bm{R}^{I}_{B}\bigg[      (\bm{\ddot{r}}_{\Delta})_{b}  +   2\bm{\omega} \times (\bm{\dot{r}}_{\Delta})_{b} +  \bm{\dot{\omega}} \times (\bm{r}_{\Delta})_{b}
\end{split}\\
\begin{split}\label{GC}
+ \bm{\omega} \times \big( \bm{\omega} \times  (\bm{r}_{\Delta})_{b} \big)          \bigg] + \dfrac{1}{M} \bm{F}_{g,I} + \dfrac{1}{M}\bm{F}_{r,I},
\end{split}
\end{align}
where $(\bm{r}_{\Delta})_{b}$ can be easily computed as follows:
\begin{align}\nonumber
\begin{split}
(\bm{r}_{\Delta})_{b} = - \dfrac{\sum_{i=1}^{4}m_{i}\bm{r}_{b,i}}{m_{b} + \sum_{i=1}^{4}m_{i}} = -\dfrac{\sum_{i=1}^{4}m_{i}\bm{r}_{b,i}}{M}
\end{split}\\
\begin{split}\nonumber
= -\dfrac{m}{M}\bigg(\begin{bmatrix}
                        \frac{L}{2} + \ell_{1} \\
                        0 \\
                        0
                      \end{bmatrix}   + \begin{bmatrix}
                        -\frac{L}{2} + \ell_{3} \\
                        0 \\
                        0
                      \end{bmatrix}   + \begin{bmatrix}
                        0\\
                        \frac{L}{2} + \ell_{2}  \\
                        0
                      \end{bmatrix}
\end{split}\\
\begin{split}\label{rdelta}
  + \begin{bmatrix}
                        0\\
                        -\frac{L}{2} + \ell_{4}  \\
                        0
                      \end{bmatrix}      \bigg) = -\beta\begin{bmatrix}
                                                                  \ell_{1} + \ell_{3} \\
                                                                  \ell_{2} + \ell_{4} \\
                                                                  0
                                                                \end{bmatrix},
\end{split}
\end{align}
where we define $\beta$ as the ratio of the swash mass weight and the total UAV weight $\beta=\frac{m}{M}$, $m_{b}$ denotes the mass of the UAV rigid body (without steering masses), $\bm{r}_{b,i}$ represents the position of the  \textit{i}-th swash masses, expressed in the body-fixed frame \footnote{Note that we assume that the origin of the GC coincides with the CM of the UAV during hovering.}. $\frac{L}{2}$ and $\ell_{i}$ denote the rest position of the \textit{i}-th swash masses during hovering and the instantaneous position of the \textit{i}-th swash mass, respectively.

We now determine the angular momentum of our UAV about the center of mass CM, which may have an acceleration $\bm{a}_{c} = [\ddot{x}, \hspace{0.2em} \ddot{y}, \hspace{0.2em} \ddot{z}]_{c}^{T}$.
The rotational motion can be obtained by applying the second cardinal Newton's equation (Euler's moment equation ) \cite{Prof_Tonello} as follows
\begin{equation}\label{Rotational Motion1}
\bigg(\dfrac{d\bm{L}}{dt}\bigg)_{C} = \bm{M}_{C}.
\end{equation}
The angular momentum $\bm{L}$ is equal to
\begin{equation}\label{L}
\bm{L} = \sum_{i=1}^{5}\bm{r}_{i} \times \bm{Q}_{i} = \sum_{i=1}^{5}\bm{r}_{i} \times m_{i}\bm{\upsilon}_{i} =  \sum_{i=1}^{5}\bm{r}_{i} \times (\bm{\omega} \times \bm{r}_{i})m_{i} = \bm{I}\bm{\omega},
\end{equation}
where $\bm{r}_{i}$ is the position vector relative to the CM of the representative particle of mass $m_{i}$, $m_{i} = m$ for $i=1, ..., 4$ and $m_{5} = m_{b}$. For our system, the velocity of $m_{i}$ relative to CM is $\dot{\bm{r}}_{i} = \bm{\omega} \times \bm{r}_{i}$. $\bm{I} \in \mathbb{R}^{3 \times 3}$ is the overall inertia tensor, or inertia matrix of the system. In computing the angular momentum $\bm{L}$ it should be taken into account the fact that, due to the swash masses, the inertia matrix is time-dependent. Furthermore, we assume the two swash masses to be mutually constrained at constant distance $L$ and $\frac{L}{2} + \ell_{2} - (-\frac{L}{2} +\ell_{4}) = L$ and $\frac{L}{2} + \ell_{1} - (-\frac{L}{2} +\ell_{3}) = L$. Since, $\ell_{2} = \ell_{4}$ and $\ell_{1} = \ell_{3}$, for simplicity we define $\ell_{y} = 2\ell_{2}$ and $\ell_{x} = 2\ell_{1}$.
The elements of the inertia matrix can be computed as follows

\begin{align}\nonumber
\begin{split}
I_{xx} =  m_{b}\bigg[  -2\beta\ell_{y}  \bigg]^{2}
\end{split}\\
\begin{split}\label{Ixx}
+ m\bigg[ \bigg(\dfrac{1}{2}-2\beta\bigg)\ell_{y} + \dfrac{L}{2}   \bigg]^{2} + m\bigg[ \bigg(\dfrac{1}{2}-2\beta\bigg)\ell_{y} - \dfrac{L}{2}          \bigg]^{2}
\end{split}\\
\begin{split}\nonumber
I_{yy} =  m_{b}\bigg[  -2\beta\ell_{x}  \bigg]^{2}
\end{split}\\
\begin{split}\label{Iyy}
+ m\bigg[ \bigg(\dfrac{1}{2}-2\beta\bigg)\ell_{x} + \dfrac{L}{2}  \bigg]^{2} + m\bigg[ \bigg(\dfrac{1}{2}-2\beta\bigg)\ell_{x} - \dfrac{L}{2}          \bigg]^{2}
\end{split}
\end{align}
\begin{align}
\begin{split}\nonumber
I_{zz} =  m_{b}\bigg[  -2\beta\ell_{y}  \bigg]^{2} + m_{b}\bigg[  -2\beta\ell_{x}  \bigg]^{2}
\end{split}\\
\begin{split}\nonumber
 + m\bigg[ \bigg(\dfrac{1}{2}-2\beta\bigg)\ell_{x} + \dfrac{L}{2}   \bigg]^{2} + m\bigg[ \bigg(\dfrac{1}{2}-2\beta\bigg)\ell_{x} - \dfrac{L}{2}          \bigg]^{2}
\end{split}\\
\begin{split}\nonumber
 + m\bigg[ \bigg(\dfrac{1}{2}-2\beta\bigg)\ell_{y} + \dfrac{L}{2} \bigg]^{2} + m\bigg[ \bigg(\dfrac{1}{2}-2\beta\bigg)\ell_{y} - \dfrac{L}{2}          \bigg]^{2}
\end{split}
\end{align}
\begin{align}
\begin{split}\nonumber
I_{xy} = I_{yx} = -m_{b}\Bigg[\bigg[  -2\beta\ell_{y}\cos(\phi)  \bigg]\bigg[  -2\beta\ell_{x}\cos(\theta)  \bigg]\Bigg]
\end{split}\\
\begin{split}\nonumber
-m\bigg[ \big[ \bigg(\dfrac{1}{2}-2\beta\bigg)\ell_{x}\cos(\theta)   + \dfrac{L}{2}\cos(\theta)  \big]
\end{split}\\
\begin{split}\nonumber
 \big[ \bigg(\dfrac{1}{2}-2\beta\bigg)\ell_{y}\cos(\phi) + \dfrac{L}{2}\cos(\phi) \big]        \bigg]
\end{split}\\
\begin{split}\label{Ixy}
-m\bigg[ \big[ \bigg(\dfrac{1}{2}-2\beta\bigg)\ell_{x}\cos(\theta)   - \dfrac{L}{2}\cos(\theta)  \big]
\end{split}\\
\begin{split}\nonumber
 \big[ \bigg(\dfrac{1}{2}-2\beta\bigg)\ell_{y}\cos(\phi) - \dfrac{L}{2}\cos(\phi) \big]        \bigg]
\end{split}\\
\begin{split}\label{Ixzy}
I_{xz} = I_{yz} = 0.
\end{split}
\end{align}
We are in the position to write the full dynamical system model.

\textit{Full Swash Mass UAV Dynamical System Model:}
From the results above, the full non-linear dynamical system model of the UAV can be written as
\begin{align}\nonumber
\begin{split}
 \begin{bmatrix}
        \ddot{x} \\
        \ddot{y} \\
        \ddot{z}
      \end{bmatrix} = \bm{R}^{I}_{B}\bigg[      (\bm{\ddot{r}}_{\Delta})_{b}  +   2\bm{\omega} \times (\bm{\dot{r}}_{\Delta})_{b} +  \bm{\dot{\omega}} \times (\bm{r}_{\Delta})_{b}
\end{split}\\
\begin{split}\label{Full_3D_model T}
+ \bm{\omega} \times \big( \bm{\omega} \times  (\bm{r}_{\Delta})_{b} \big)          \bigg] + \dfrac{1}{M} \bm{F}_{g,I} + \dfrac{1}{M}\bm{F}_{r,I},
\end{split}
\end{align}
\begin{equation}\label{Full_3D_model R}
\bm{\dot{I}}\bm{\omega} + \bm{I}\bm{\dot{\omega}} = \bm{M}_{C}.
\end{equation}

\begin{remark}
The model has six outputs $\{x, \hspace{0.2em} y, \hspace{0.2em}z, \hspace{0.2em}\phi, \hspace{0.2em}\theta, \hspace{0.2em}\psi \}$ and only four control inputs $\{T_{1}, \hspace{0.2em}\ell_{y}, \hspace{0.2em}\ell_{x}, \hspace{0.2em}M_{\psi}\}$. Therefore, the UAV is an under-actuated dynamical system. Furthermore, it should be noted that such a dynamical system is not common since the inertia matrix depends on the control inputs which in turn make it time dependent. This is taken into account in the relation (\ref{Full_3D_model R}).
\end{remark}

\begin{center}
\begin{figure*}[h]
\includegraphics[width=4.6in,height=3in,clip,keepaspectratio]{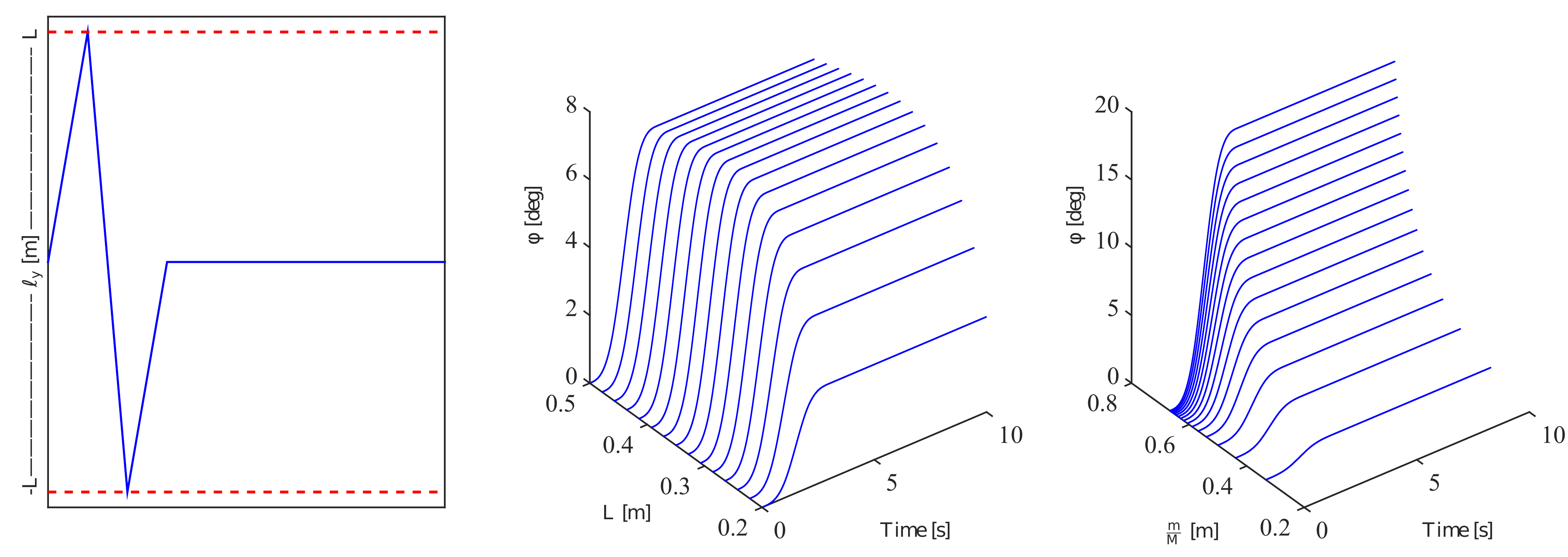}
\centering
\vspace{-0.1em}
\caption{Effect of sizing on control input response. a) a saw triangle function $\ell_{y}$. b) Relation between the pitch angle $\phi$ and the arm length $L$ for a given constant mass ratio $\beta=\frac{m}{M} = 0.09$. c) Relation between the pitch angle $\phi$ and the mass ratio $\beta = \frac{m}{M}$ for a given constant $L = 0.3 m$. }
\vspace{-1.0em}
\label{Sizing}
\centering
\end{figure*}
\end{center}

\begin{figure*}[h]
\includegraphics[scale=0.23]{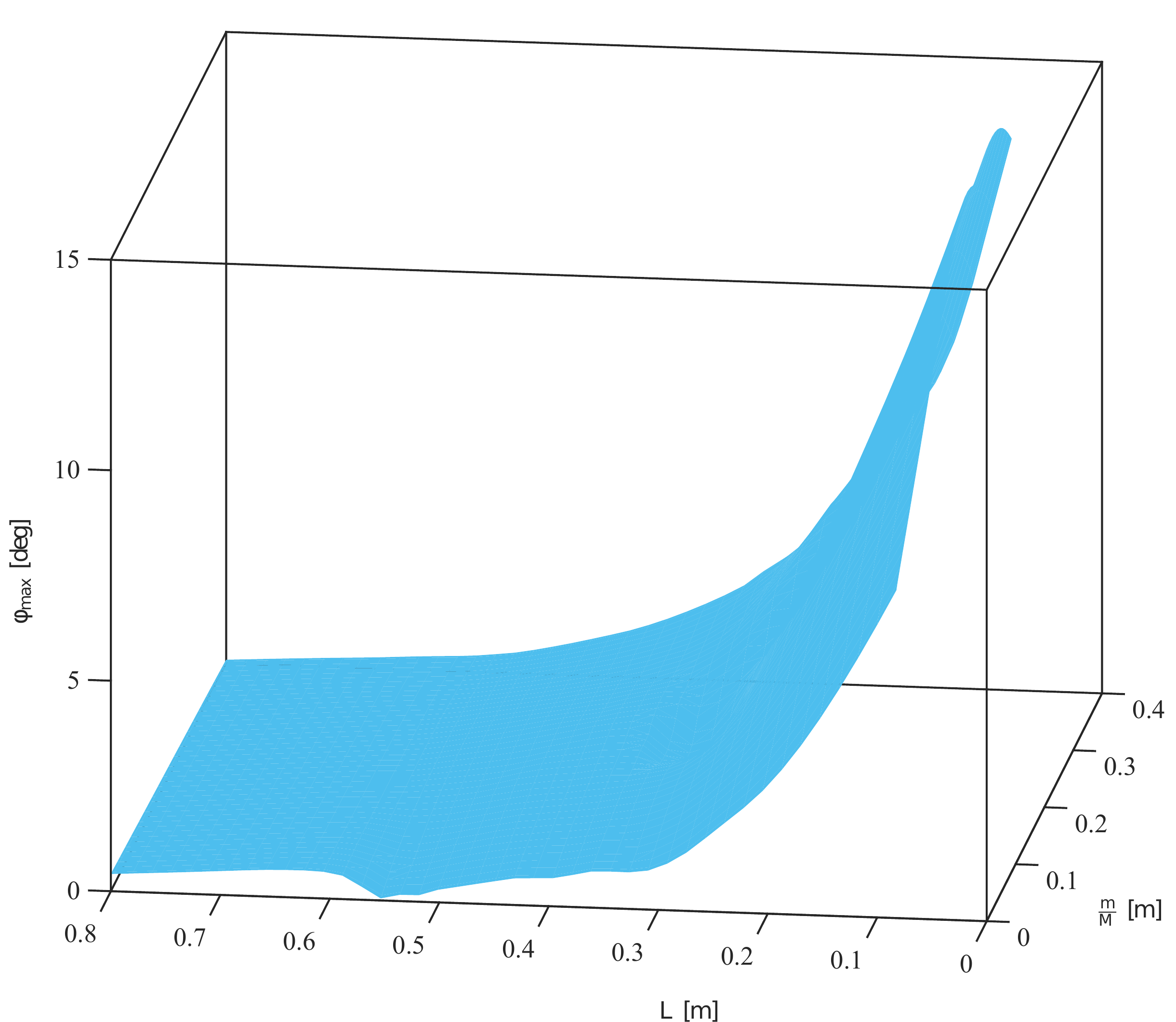}
\centering
\vspace{-0.1em}
\caption{Maximum pitch angle $\phi_{max}$ as a function of the mass ratio $\beta$ and arm length $L$. }
\vspace{-1.0em}
\label{Sizet}
\end{figure*}

\section{effect of sizing on control input response}
In this section, we want to get an understanding of the motion behavior of the considered UAV. This helps the dimensioning/sizing of its structure as a function of the design parameters, and to introduce the control problem (discussed in the next section). We start by analyzing the effect of sizing on the control input response.

For the sake of exposition simplicity, to analyze the effect of sizing on control input response, we assume a 2D planner scenario so that the UAV structure can be depicted as in (Fig.~\ref{Frames}\hspace{0.2em}(b)). The planner dynamics, obtained from (\ref{Full_3D_model T}) and (\ref{Full_3D_model R}), then becomes as follows

\textit{2D Dynamical System Model}: If we consider a two dimensions UAV structure, the dynamical system model reads as follows
\begin{align}
\begin{split}\nonumber
M\begin{bmatrix}
   \ddot{y} \\[2pt]
   \ddot{z}
\end{bmatrix}  = \beta\begin{bmatrix}
                     2\dot{\phi}\dot{\ell}_{y}\sin(\phi) - \ddot{\ell}_{y}\cos(\phi) + \ell_{y}\ddot{\phi}\sin(\phi) + \ell_{y}\dot{\phi}^{2}\cos(\phi) \\[2pt]
                     -\ddot{\ell}_{y}\sin(\phi) + \ell_{y}\dot{\phi}^{2}\sin(\phi) - 2\dot{\phi}\dot{\ell}_{y}\cos(\phi) - \ell_{y}\ddot{\phi}\cos(\phi)
                  \end{bmatrix}
\end{split}\\
\begin{split}\label{Full_2D_model T}
                  + \begin{bmatrix}
                         T_{1}\sin(\phi) \\[2pt]
                         T_{1}\cos(\phi)
                       \end{bmatrix} - \begin{bmatrix}
                                         0 \\[2pt]
                                         Mg
                                       \end{bmatrix}
\end{split}\\
\begin{split}\nonumber
\bigg[ m_{b}(  -2\beta\ell_{y}  )^{2} + m\big((\dfrac{1}{2}-2\beta)\ell_{y} + \dfrac{L}{2} \big)^{2}
\end{split}\\
\begin{split}\nonumber
+ m\big((\dfrac{1}{2}-2\beta)\ell_{y} + \dfrac{L}{2}\big)^{2}    \bigg]\ddot{\phi}
\end{split}\\
\begin{split}\label{Full_2D_model R}
+\bigg[ \ell_{y}\dot{\ell}_{y}(m - 8\beta m + 16\beta{^2}m + 8\beta^{2}m_{b}) \bigg]\dot{\phi} = \beta T_{1}\cos(\phi)\ell_{y}.
\end{split}
\end{align}
In this case, the control inputs are limited to $\bm{u}= [T_{1}, \hspace{0.2em}\ell_{y}]^{T}$. While the angle $\phi$ in (\ref{Full_2D_model R}) is determined only by the UAV main design parameters $M$, $m$, $L$, the swash mass position $\ell_{y}$ and the thrust $T_{1}$, the translational motion in (\ref{Full_2D_model T}) depends on the thrust $T_{1}$, swash mass position $\ell_{y}$, velocity $\dot{\ell}_{y}$, acceleration $\ddot{\ell}_{y}$ and the pitch angle $\phi$. Since, only the rotational dynamics in (\ref{Full_2D_model R}) is of importance at this point, we set the thrust around the hovering point, which is $T_{1} = \frac{Mg}{\cos(\phi)}$. We analyse the rotational dynamics in (\ref{Full_2D_model R}) for an elementary control input as a function of design parameters $M, m, L$. We consider the control input $\ell_{y}$ to be equal to a saw triangle function as shown in Fig.~\ref{Sizing}\hspace{0.2em}(a). In Fig.~\ref{Sizing},\hspace{0.2em} we plot the relation between:

a) the control input $\ell_{y}$; \par
b) the pitch angle response and the arm length $L$ for a given constant mass ratio $\beta = \frac{m}{M}$; \par
c) the pitch angle response and the mass ratio $\beta = \frac{m}{M}$ for a given constant $L$. \par
Fig.~\ref{Sizing}\hspace{0.2em}(b) shows that it is sufficient to have the arm length $L$ equal $0.4 m$ for a given constant mass ratio $\beta = \frac{m}{M} = 0.09$ to induce a high value for the pitch angle response. The pitch response increases sharply for a given constant $L = 0.3 m$ (Fig.~\ref{Sizing}\hspace{0.2em}(c)) as the mass ratio $\beta = \frac{m}{M}$ increases.
From the results, it can be seen that the pitch angle $\phi$ reaches a maximum value $\phi_{max}$. Therefore, in Fig.~\ref{Sizet} we plot the maximum value of pitch angle $\phi_{max}$ as a function of the pair made of the arm length $L$ and the mass ratio $\beta = \frac{m}{M}$. Overall, it can be seen that there exist several design solutions, i.e., choices of the parameters $M, m, L$, to have the UAV reach a given pitch angle.

The presented results suggest that to obtain better dynamical performance, one should increase $\beta = \frac{m}{M}$ indefinitely. In practice, the swash mass weight is limited due to the physical constraint of the UAV's structure and we can not distribute all the weight on the swash masses only. Therefore, the design strategy is to provide a trade-off between the design parameters of the UAV and the mission flight requirements.

An example of design parameters that will be used also in the numerical results section, is reported in Table~\ref{TableI}.

In the next section, we develop a control law that allows the UAV to follow a certain trajectory.

\section{trajectory tracking and control}
\label{Trajectory Tracking}
The main objective of the automatic control is to act on the rotor thrust and swash masses position so that the UAV can track a desired target trajectory $\{x_{\ast}, y_{\ast}, z_{\ast}\}$ with stable Euler angles. To proceed, for the sake of exposition and understanding simplicity, we focus first on the 2D planner. The derivation of the control laws in the full 3D case, will be given next (Section V). \par
In the 2D planner, we can make two observations: firstly, the dynamical system comprises two main sub-systems (altitude and horizontal coordinate relations in (\ref{Full_2D_model T}) that are coupled via the control input $T_1$ and the pitch angle $\phi$), and a third sub-system (pitch angle relation (\ref{Full_2D_model R}) with control input $\ell_{y}$ coupled to the other subsystems via the pitch angle $\phi$ itself). The pitch angle is determined by the control input $\ell_{y}$ only. Secondly, the dynamics of the pitch angle depends not only on the input $\ell_{y}$ but also on its derivative.

The first observation suggests the use of a non-linear back-stepping control mechanism. Back-stepping control is an iterative approach that breaks down the controller design into steps and therefore the control of each sub-system is implemented iteratively \cite{zhou2008adaptive, 4058900}. In detail, since we want to control the UAV via $T_{1}$ and $\ell_{y}$ to follow a certain target trajectory ($y_{\ast}(t), z_{\ast}(t)$), we first determine the control input $T_{1}$ to reach the target altitude. Then, we define the virtual control input $u_y = \sin(\phi)$ and derive a control law for it to reach the target horizontal coordinate. Finally, we regulate the value of $\ell_{y}$ so that the pitch angle $\phi$ derived from the virtual control input $u_y$ is obtained. Furthermore, each control law in the back-stepping process is derived via the Lyapunov methodology \cite{Khalil:1173048}.

The second observation suggests to proceed by deriving a dynamical system model where the inertia term in ({\ref{Full_2D_model R}) is approximated with an appropriate constant value. In particular, since the inertia term varies as the swash masses move from a minimum to a maximum displacement position, we propose to approximate it as
\begin{align}\nonumber
\begin{split}
I_{xx} \thickapprox I_{c} =  m_{b}\bigg(  -2\beta\ell_{mean}  \bigg)^{2}  +  m\bigg[ \bigg(\dfrac{1}{2}-2\beta\bigg)\ell_{mean} + \dfrac{L}{2}  \bigg]^{2}
\end{split}\\
\begin{split}\label{constant inertia}
 + m\bigg[ \bigg(\dfrac{1}{2}-2\beta\bigg)\ell_{mean} - \dfrac{L}{2}   \bigg]^{2},
 \end{split}
\end{align}
with
\begin{equation}\label{r_mean}
\ell_{mean} =  \bigg(\dfrac{\ell_{max} + \ell_{min}}{2}\bigg).
\end{equation}
and
\begin{align}\label{max and min variation}
\ell_{max} = L,  \hspace{0.6em} \ell_{min} = -L.
\end{align}

Therefore, the simplified dynamical system in (\ref{Full_2D_model R}) becomes as follows.\\

\textit{2D Simplified Dynamical System Model}: Under the assumption in (\ref{constant inertia}), the 2D dynamical system model reads as follows

\begin{align}
\label{Newton's_Law (2D) final}
\begin{split}
M\ddot{y}   =   \beta f_{1}(\phi, \dot{\phi}, \ell_{y}, \dot{\ell}_{y}, \ddot{\ell}_{y}) + T_{1}\sin(\phi) \\
\end{split}\\
\begin{split}\label{Newton's_Law (2D) finale1}
M\ddot{z}   = \beta f_{2}(\phi, \dot{\phi}, \ell_{y},\dot{\ell}_{y},\ddot{\ell}_{y}) -Mg + T_{1}\cos(\phi) \\
\end{split}\\
\begin{split}\label{Newton's_Law (2D) finale2}
I_{c}\ddot{\phi} =  \beta T_{1}\cos(\phi)\ell_{y}\\
\end{split}\\
\begin{split}\label{Newton's_Law (2D) finale3}
s.t.     \hspace{1em} - L \leqslant \ell_{y} \leqslant L.
\end{split}
\end{align}
where
\begin{align}
\begin{split}\nonumber
f_{1}(\phi, \dot{\phi}, \ell_{y},\dot{\ell}_{y},\ddot{\ell}_{y}) = 2\dot{\phi}\dot{\ell}_{y}\sin(\phi) - \ddot{\ell}_{y}\cos(\phi)
\end{split}\\
\begin{split}\label{F1}
+ \ell_{y}\ddot{\phi}\sin(\phi) + \ell_{y}\dot{\phi}^{2}\cos(\phi)
\end{split}
\end{align}
and
\begin{align}\nonumber
\begin{split}
f_{2}(\phi, \dot{\phi}, \ell_{y},\dot{\ell}_{y},\ddot{\ell}_{y}) =  -\ddot{\ell}_{y}\sin(\phi) + \ell_{y}\dot{\phi}^{2}\sin(\phi)
\end{split}\\
\begin{split}\label{F2}
- 2\dot{\phi}\dot{\ell}_{y}\cos(\phi) - \ell_{y}\ddot{\phi}\cos(\phi).
\end{split}
\end{align}

Before applying the backstepping approach, we need to be sure that two functions $f_{1}$ and $f_{2}$ in (\ref{Newton's_Law (2D) final}) and (\ref{Newton's_Law (2D) finale1}) are bounded.

\begin{prop}
Let $f_{1}(\phi, \dot{\phi}, \ell_{y},\dot{\ell}_{y},\ddot{\ell}_{y}) = 2\dot{\phi}\dot{\ell}_{y}\sin(\phi) - \ddot{\ell}_{y}\cos(\phi) + \ell_{y}\ddot{\phi}\sin(\phi) + \ell_{y}\dot{\phi}^{2}\cos(\phi)$ be a function defined over a set $\chi$, where all the variables $\phi, \dot{\phi}, \ell_{y},\dot{\ell}_{y},\ddot{\ell}_{y}$ in the set $\chi$ are bounded. Then, there exists a positive constant $\Theta_{1}$ such that
\begin{align}\nonumber
\begin{split}
f_{1}(\phi, \dot{\phi}, \ell_{y},\dot{\ell}_{y},\ddot{\ell}_{y}) \leq \sqrt{\ddot{\ell}^2_{y} + 4\dot{\phi}^2\dot{\ell}^{2}_{y}} + \sqrt{\ell^{2}_{y}\dot{\phi}^{4} + \ell^{2}_{y}\ddot{\phi}^{2}} \leq \Theta_{1}
\end{split}\\
\begin{split} \label{Bounded}
f_{1}(\phi, \dot{\phi}, \ell_{y},\dot{\ell}_{y},\ddot{\ell}_{y}) \geq -\sqrt{\ddot{\ell}^2_{y} + \dot{\phi}^2\dot{\ell}^{2}_{y}} -\sqrt{\ell^{2}_{y}\dot{\phi}^{4} + \ell^{2}_{y}\ddot{\phi}^{2}} \geq -\Theta_{1}
\end{split}
\end{align}
\end{prop}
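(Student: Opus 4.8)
The plan is to exploit the fact that, after a suitable regrouping, $f_{1}$ is a sum of two expressions each of the form $a\sin(\phi)+b\cos(\phi)$, for which the elementary amplitude bound $|a\sin(\phi)+b\cos(\phi)|\le\sqrt{a^{2}+b^{2}}$ holds uniformly in $\phi$. This bound is just the Cauchy--Schwarz inequality applied to the vectors $(a,b)$ and $(\sin\phi,\cos\phi)$, together with $\sin^{2}\phi+\cos^{2}\phi=1$. The whole argument then reduces to reading off the correct $(a,b)$ from each group and summing the resulting radicals.

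First I would split the four summands of $f_{1}$ into the pair carrying the servo terms $\dot{\ell}_{y},\ddot{\ell}_{y}$ and the pair carrying the angular terms $\ell_{y}\ddot{\phi},\ell_{y}\dot{\phi}^{2}$:
\begin{equation}\nonumber
f_{1} = \underbrace{\big(2\dot{\phi}\dot{\ell}_{y}\big)\sin(\phi) + \big(-\ddot{\ell}_{y}\big)\cos(\phi)}_{\text{servo pair}} + \underbrace{\big(\ell_{y}\ddot{\phi}\big)\sin(\phi) + \big(\ell_{y}\dot{\phi}^{2}\big)\cos(\phi)}_{\text{angular pair}}.
\end{equation}
Reading off $a=2\dot{\phi}\dot{\ell}_{y}$, $b=-\ddot{\ell}_{y}$ for the servo pair and $a=\ell_{y}\ddot{\phi}$, $b=\ell_{y}\dot{\phi}^{2}$ for the angular pair, the amplitude bound produces exactly the two radicals $\sqrt{\ddot{\ell}^{2}_{y}+4\dot{\phi}^{2}\dot{\ell}^{2}_{y}}$ and $\sqrt{\ell^{2}_{y}\dot{\phi}^{4}+\ell^{2}_{y}\ddot{\phi}^{2}}$ displayed in (\ref{Bounded}). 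Adding the two upper bounds gives the first inequality; the lower bound follows identically from $a\sin(\phi)+b\cos(\phi)\ge-\sqrt{a^{2}+b^{2}}$ applied to the same two pairs.

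The final step passes from the radical bound to the uniform constant $\Theta_{1}$. Here I would invoke the hypothesis that $\phi,\dot{\phi},\ell_{y},\dot{\ell}_{y},\ddot{\ell}_{y}$ all range over the bounded set $\chi$: each radical is a continuous function of these bounded arguments, hence is itself bounded above, say by $\Theta_{1}/2$ apiece, so that their sum does not exceed $\Theta_{1}$. Equivalently, one may simply define $\Theta_{1}=\sup_{\chi}|f_{1}|$, which is finite because $f_{1}$ is continuous on a bounded set. Either route closes both chains of inequalities.

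I do not expect a genuine obstacle in this argument; it is essentially bookkeeping. The one point requiring care is the grouping in the first step, since the precise radicals in the statement are dictated by which summands are collected under $\sin(\phi)$ versus $\cos(\phi)$, so the terms must be paired exactly as above. I note only that the same amplitude $\sqrt{4\dot{\phi}^{2}\dot{\ell}^{2}_{y}+\ddot{\ell}^{2}_{y}}$ governs both directions, so the coefficient of $\dot{\phi}^{2}\dot{\ell}^{2}_{y}$ in the first radical of the lower bound should match the $4$ appearing in the upper bound; in any case this has no bearing on the concluding bound by $-\Theta_{1}$, which only requires that a finite constant exist.
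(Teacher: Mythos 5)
Your proposal is correct and takes essentially the same route as the paper: the paper groups the four terms into the identical two pairs and bounds each by its amplitude $\sqrt{a^{2}+b^{2}}$ via the harmonic-addition identity $a\cos(\phi)+b\sin(\phi)=\sqrt{a^{2}+b^{2}}\cos\big(\phi-\atan(\tfrac{a}{b})\big)$, whereas you justify the very same amplitude bound by Cauchy--Schwarz, a distinction without substance. Your closing remark is also apt: the paper handles the lower bound only ``by symmetry,'' so the missing factor $4$ in the first radical of the lower-bound chain is indeed a typo in the statement, and as you note it is harmless for the existence of $\Theta_{1}$.
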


\begin{proof}
$f_{1}(\phi, \dot{\phi}, \ell_{y},\dot{\ell}_{y},\ddot{\ell}_{y}) = 2\dot{\phi}\dot{\ell}_{y}\sin(\phi) - \ddot{\ell}_{y}\cos(\phi) + \ell_{y}\ddot{\phi}\sin(\phi) + \ell_{y}\dot{\phi}^{2}\cos(\phi) =  a\cos(\phi) + b\sin(\phi) + c\cos(\phi) + d\sin(\phi)$ where $a=-\ddot{\ell}_{y}$, $b=2\dot{\phi}\dot{\ell}_{y}$, $c= \ell_{y}\dot{\phi}^{2}$ and $d=\ell_{y}\ddot{\phi}$. Exploiting the trigonometric property, $f_{1}(\phi, \dot{\phi}, \ell_{y},\dot{\ell}_{y},\ddot{\ell}_{y})$ can be written as:
\begin{align}\nonumber
\begin{split}
f_{1}(\phi, \dot{\phi}, \ell_{y},\dot{\ell}_{y},\ddot{\ell}_{y}) =
\end{split}\\
\begin{split}\nonumber
a\cos(\phi) + b\sin(\phi) + c\cos(\phi) + d\sin(\phi)\equiv
\end{split}\\
\begin{split}\label{Trog}
\sqrt{a^2 + b^2}\cos\bigg(\phi-\atan(\dfrac{a}{b})\bigg)  +\sqrt{c^2 + d^2}\cos\bigg(\phi-\atan(\dfrac{c}{d})\bigg).
\end{split}
\end{align}
Therefore, we can write
\begin{align}
\begin{split}\nonumber
f_{1}(\phi, \dot{\phi}, \ell_{y},\dot{\ell}_{y},\ddot{\ell}_{y}) \leq \sqrt{\ddot{\ell}^2_{y} + 4\dot{\phi}^2\dot{\ell}^{2}_{y}} + \sqrt{\ell^{2}_{y}\dot{\phi}^{4} + \ell^{2}_{y}\ddot{\phi}^{2}} \leq \Theta_{1}
\end{split}\\
\begin{split}\label{Bounded in proof}
= \sqrt{\ddot{\ell}^2_{y,max} + 4\dot{\phi}_{max}^2\dot{\ell}^{2}_{y,max}} + \sqrt{\ell^{2}_{y,max}\dot{\phi}_{max}^{4} + \ell^{2}_{y,max}\ddot{\phi}_{max}^{2}}.
\end{split}
\end{align}
The last inequality in (\ref{Bounded in proof}) comes from the fact that $a, b, c$ and $d$ are assumed bounded. By symmetry, we can conclude that $-\sqrt{\ddot{\ell}^2_{y} + 4\dot{\phi}^2\dot{\ell}^{2}_{y}} - \sqrt{\ell^{2}_{y}\dot{\phi}^{4} + \ell^{2}_{y}\ddot{\phi}^{2}} \geq -\Theta_{1}$ is also bounded.
\end{proof}
The same bounding methodology can be applied to the second function and we obtain $ f_{2}(\phi, \dot{\phi}, \ell_{y},\dot{\ell}_{y},\ddot{\ell}_{y}) \leq \sqrt{\ddot{\ell}^2_{y} + 4\dot{\phi}^2\dot{\ell}^{2}_{y}} + \sqrt{\ell^{2}_{y}\dot{\phi}^{4} + \ell^{2}_{y}\ddot{\phi}^{2}} \leq \Theta_{2}$.

Now, let $\bm{u} = [T_{1}, \hspace{0.5em}\ell_{y}]^T$ be the control input vector. The dynamics equations (\ref{Newton's_Law (2D) final})-(\ref{Newton's_Law (2D) finale3}) can be written in a state-space realization form $f(\bm{x}, \bm{u})$ by defining $\bm{x} = [x_{1}, ..., x_{6}]^T$ as the state vector:
\begin{equation}\label{State-space}
\begin{bmatrix}
x_{1} \\[2pt]
x_{2}
\end{bmatrix}= \begin{bmatrix}
                  y \\[2pt]
                  \dot{y}
                \end{bmatrix}, \begin{bmatrix}
x_{3} \\[2pt]
x_{4}
 \end{bmatrix}= \begin{bmatrix}
                  z \\[2pt]
                  \dot{z}
                \end{bmatrix}, \begin{bmatrix}
x_{5} \\[2pt]
x_{6}
\end{bmatrix}= \begin{bmatrix}
                  \phi \\[2pt]
                  \dot{\phi}
                \end{bmatrix}.
 \end{equation}
Consequently, the dynamical system model of the UAV using the state variables in (\ref{State-space}) can be written as
\begin{align}\label{state-space-dynamic}
\begin{split}
\dot{x}_{1} = x_{2} \\
\end{split}\\
\begin{split} \label{state-space-dynamic1}
\dot{x_{2}} = \dfrac{\beta \Theta_{1} + T_{1}\sin(x_{5})}{M} \\
\end{split}\\
\begin{split} \label{state-space-dynamic2}
\dot{x_{3}} = x_{4} \\
\end{split}\\
\begin{split} \label{state-space-dynamic3}
\dot{x_{4}} = -g + \dfrac{\beta \Theta_{2} + T_{1}\cos(x_{5})}{M} \\
\end{split}\\
\begin{split} \label{state-space-dynamic4}
\dot{x_{5}} = x_{6} \\
\end{split}\\
\begin{split}\label{state-space-dynamic5}
\dot{x_{6}} =  \dfrac{\beta T_{1} \cos(x_{5})\ell_{y}}{I_{c}}\\
\end{split}
\end{align}

\begin{table}[h]
\renewcommand{\arraystretch}{1.3}
\caption{Example of design parameters of the UAV}
\label{TableI}
\centering
\begin{tabular}{c c c c}
\bfseries Parameters & \bfseries Symbol & \bfseries Value & \bfseries Unit   \\\hlineB{2.5}
\hline \hline
Mass               & $M$ & $1.1$ & $kg$\\
Swash mass               & $m$ & $0.1$ & $kg$\\
Maximum displacement of the mass             & $L$ & $0.2$ & $m$\\
Gravitational acceleration     & $g$      & $9.81$ & $m/s^{2}$\\
\\\bottomrule
\end{tabular}
\end{table}

\begin{figure*}[h]
\includegraphics[scale=0.5]{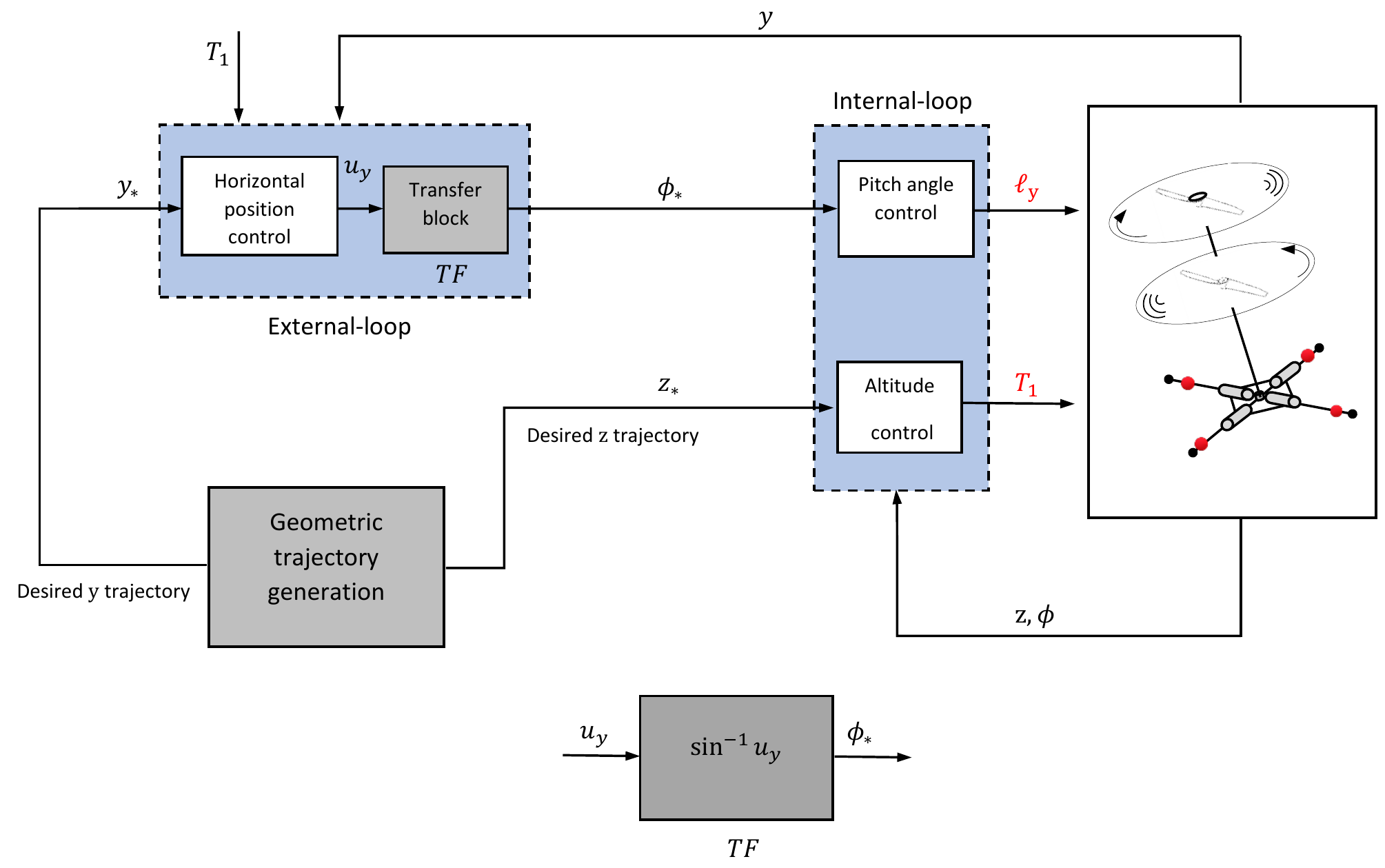}
\centering
\vspace{-0.1em}
\caption{Back-stepping controller with internal and external control loop architecture, with focus to the 2D case. }
\vspace{-1.0em}
\label{Architecture}
\end{figure*}

\subsection{Altitude and Horizontal Position Control}
The control mechanism for the proposed UAV starts by formulating a control law for $T_{1}$ (the total thrust generated by two rotor blades), so that given any initial pitch angle $\phi = x_{5}$ state, we reach a target altitude. The control law is derived by following the Lyapunov methodology \cite{Khalil:1173048} applied to the sub-system (\ref{Newton's_Law (2D) finale1}). It is given by:

\begin{equation}\label{T_{1} control law}
T_{1} = \frac{M}{\cos(x_{5})} \pmb{\bigg[}    g - \dfrac{\beta\Theta_{2}}{M} + e_{3} + \ddot{x}_{3\ast} + k_{3}e_{4} - k^{2}_{3}e_{3} + k_{4}e_{4} \pmb{\bigg]},
\end{equation}
where $k_{3}$ and $k_{4}$ are control gains with positive value and the error terms are defined as follows:
\begin{equation}\label{error3}
e_{3} = x_{3\ast} - x_{3}
\end{equation}
\begin{equation}\label{error4}
e_{4} = x_{4\ast} - x_{4}.
\end{equation}
Now that we have a solution for $T_{1}$, we proceed by deriving a control law for the virtual control input $u_{y} = \sin(x_{5})$ responsible for the motion $y$, by using the same methodology. After some calculations the law becomes (see Appendix \ref{SecondAppendix})
\begin{equation}\label{L122}
u_{y} = \frac{M}{T_{1}} \pmb{\bigg[} - \dfrac{\beta\Theta_{1}}{M} +  e_{1} + \ddot{x}_{1\ast} + k_{5}e_{2} - k^{2}_{5}e_{1} + k_{6}e_{2}   \pmb{\bigg]},
\end{equation}
where $k_{5}$ and $k_{6}$ are control gains with positive values. In (\ref{L122}), the error terms are defined as follows:
\begin{equation}\label{L12}
e_{1} = x_{1\ast} - x_{1}
\end{equation}
\begin{equation}\label{L12}
e_{2} = x_{2\ast} - x_{2}.
\end{equation}
\begin{remark}
The function $u_{y}$ obtained in (\ref{L122}), is not a direct control input. However, it determines the desired pitch angle $x_{5\ast}$ which will be in turn used as the target pitch angle for the regulator of the swash masses displacement $\ell_{y}$ (see Fig.~\ref{Architecture}).
\end{remark}
Now, the desired pitch angle $x_{5\ast}=\phi_{\ast}$ which is the input to the pitch angle controller, is obtained as:
\begin{equation}\label{desired_roll}
x_{5\ast} = \sin^{-1}(u_{y}),
\end{equation}
where $u_{y}$ is given by (\ref{L122}).
Finally, the desired pitch angle obtained in the above equations serves as the input for the pitch angle controller.

\subsection{Pitch Angle Control}
Once we have set the desired pitch angle $x_{5\ast}=\phi_{\ast}$ according to (\ref{desired_roll}), the value of $\ell_{y}$ can be regulated via a control law derived by the application of the Lyapunov method as done in the previous sub-sections. The details can be found in the Appendix \ref{ThirdAppendix} and they lead to control law for $\ell_{y}$:
\begin{align}\label{e1l_1}
\ell_{y} {}&= \frac{I_{c}}{\beta T_{1}\cos(x_{5})} \pmb{\bigg[}  e_{5} + k_{1}e_{6} - k^{2}_{1}e_{5} + k_{2}e_{6}  \pmb{\bigg]},
\end{align}
where $k_{1}$ and $k_{2}$ are control gains with positive value.
\begin{remark}
The solution $\ell_{y}$ in (\ref{e1l_1}) is not necessarily constrained in the range $-L \leq \ell_{y} \leq L$.
\end{remark}
From the \textit{Remark 3}, in order to fulfill the physical constraint for $\ell_{y}$, we propose to introduce a saturation function to obtain
\begin{equation}\label{Saturation}
\bar{\ell}_{y} = sat(\ell_{y}) =
\begin{cases}
  -L & \hspace{1em} \ell_{y} < -L \\
  \ell_{y} & \hspace{1em} -L < \ell_{y} < L\\
  L &  \hspace{1em} \ell_{y} > L .
\end{cases}
\end{equation}
However, once the saturation occurs, the errors $e_{5}$ and $e_{6}$ may increase, which leads to an oscillation of the pitch angle. Therefore, the control law in (\ref{e1l_1}) can be bettered by designing a compensator and modifying the error signals that reduce the influence of the saturation. This leads to the following modified control law
\begin{align}\label{el_1new}
\bar\ell_{y,m}&= \frac{I_{c}}{\beta T_{1}\cos(x_{5})} \pmb{\bigg[} \bar{e}_{5} + k_{1}\bar{e}_{6} - k^{2}_{1}\bar{e}_{5} + k_{2}\bar{e}_{6}  \pmb{\bigg]},
\end{align}
where the new error definitions are
\begin{equation}\label{modified_errors1}
\bar{e}_{5} = e_{5} - e^{\star}
\end{equation}
\begin{equation}\label{modified_errors2}
\bar{e}_{6} = e_{6} - \dot{e^{\star}},
\end{equation}
and the auxiliary error is updated as follows:
\begin{equation}\label{update_auxiliary}
\dot{e^{\star}} = -\dfrac{\beta \epsilon_{1}}{I_{c}}e^{\star} + \dfrac{\beta(\bar{\ell}_{y,m} - \bar{\ell_{y}})}{I_{c}},
\end{equation}
with $\epsilon_{1}$ being a positive tuning parameter.
\begin{remark}
It should be observed that instead of saturating $\ell_{y}$, the target pitch angle from (\ref{desired_roll}) can be saturated, i.e., impose a maximum limit to $\phi$. However, we found that this provides worse performance.
\end{remark}

\subsection{Overall Control Algorithm}
The overall back-stepping control algorithm is sketched in Fig.~\ref{Architecture}. Firstly, the desired trajectories are generated. Feasible trajectories can be obtained for instance, with some waypoints that are located in the search space by the user \cite{1424025}, or they are generated by a stochastic approach \cite{aerospace4020027, ge}. Another approach is using dynamic path planning which the tangent vector field guidance (TVFG) and the Lyapunov vector field guidance (LVFG) are used \cite{6494384}. From an initial state, the control input $T_{1}$ is computed according to (\ref{T_{1} control law}) to reach a certain altitude. Then, the virtual control input $u_{y}$ is derived according to (\ref{L122}) to reach a certain horizontal coordinate. This induces a regulation of the pitch angle by setting a target value from (\ref{desired_roll}) and computing the swash masses position $\ell_{y}$ according to (\ref{el_1new}). The procedure is repeated iteratively at each time step to track the target trajectory. It should also be noted that the error terms are computed between the target values and the real UAV state given by the dynamical set of equations (\ref{Full_3D_model T})-(\ref{Full_3D_model R}). The time step (sampling period) is set to a given small value $T_s$.

\subsection{Extension to 3D}
In this section, we extend the results of the 2D case to derive the control laws for the full 3D system, following the same methodology. Essentially, the control mechanism comprises three steps. The control inputs in the 3D case can be grouped in the control input vector $\bm{u} = [T_{1}, \hspace{0.5em}\ell_{y}, \hspace{0.5em}\ell_{x}, \hspace{0.5em}M_{\psi}]^T$.

Observing (\ref{Full_3D_model R}), we can say that the rotational dynamics is highly coupled. To decouple the three second-order subsystems of the UAV in (\ref{Full_3D_model R}), we choose the globally defined change of input \cite{sastry1999nonlinear}
\begin{equation}\label{Decouple}
\begin{bmatrix}
  \beta T_{1}\ell_{y} \\
  \beta T_{1}\ell_{x} \\
  M_{\psi}
\end{bmatrix} = \begin{bmatrix}
                  A_{1} & A_{2} & A_{3} \\
                  A_{4} & A_{5} & A_{6} \\
                  A_{7} & A_{8} & A_{9}
                \end{bmatrix}\begin{bmatrix}
                               \upsilon_{1} \\
                               \upsilon_{2} \\
                               \upsilon_{3}
                             \end{bmatrix},
\end{equation}
where the elements $A_{i}$ are given in the Appendix \ref{ForthAppendix} and $\bm{\upsilon} = [\upsilon_{1}, \hspace{0.2em} \upsilon_{2}, \hspace{0.2em} \upsilon_{3}]^{T}$ is the new control input vector for the rotational dynamics in (\ref{Full_3D_model R}). This transformation yields the canonical form of the rotational dynamics of the UAV in (\ref{Full_3D_model R}).

Accordingly, our control laws for the 3D system will rely on a small displacement of the CM.

\textit{Step1}. \textit{Altitude control and horizontal position}: In this step, the control laws for $T_{1}$ (total thrust) and virtual control inputs $u_{x}$ and $u_{y}$ are formulated by using the same methodology of the 2D case. Following the Lyapunov methodology $T_{1}$ is obtained as follows:
\begin{equation}\label{T_{1} control law 3D}
T_{1} = \frac{M}{\cos(x_{5})\cos(x_{9})} \pmb{\bigg[}    g+ e_{3} + \ddot{x}_{3\ast} + k_{3}e_{4} - k^{2}_{3}e_{3} + k_{4}e_{4} \pmb{\bigg]},
\end{equation}
Now, we have the solution for $T_{1}$ in the 3D case, similarly, we proceed by deriving the control laws for $u_{x}$ and $u_{y}$ responsible for the motion $x-y$ motion, yielding
\begin{equation}\label{ux 3D}
u_{x} = \frac{M}{T_{1}} \pmb{\bigg[}  e_{7} + \ddot{x}_{7\ast} + k_{7}e_{8} - k^{2}_{7}e_{7} + k_{8}e_{8}   \pmb{\bigg]},
\end{equation}
where $k_{7}$ and $k_{8}$ are control gains. In (\ref{ux 3D}), the error terms are defined as follows:
\begin{equation}\label{e ux}
e_{7} = x_{7\ast} - x_{7} = x_{\ast} - x
\end{equation}
\begin{equation}\label{edot ux}
e_{8} = x_{8\ast} - x_{8} = \dot{x}_{\ast} - \dot{x} .
\end{equation}
\begin{equation}\label{uy 3D}
u_{y} = \frac{M}{T_{1}} \pmb{\bigg[}  e_{1} + \ddot{x}_{1\ast} + k_{5}e_{2} - k^{2}_{5}e_{1} + k_{6}e_{2}   \pmb{\bigg]},
\end{equation}
The virtual control laws (\ref{ux 3D}) and (\ref{uy 3D}) determine the target pitch and roll angles which will be in turn used as target pitch and roll angles for the regulator of $\ell_{y}$ and $\ell_{x}$. Now, the target pitch and roll angles are obtained from the virtual control inputs as:
\begin{equation}\label{dpitch 3D}
x_{5\ast} = \phi_{\ast} = \sin^{-1}\bigg(  \dfrac{u_{x}\sin(\psi_{\ast})-u_{y}\cos(\psi_{\ast})}{T_{1}}   \bigg)
\end{equation}
\begin{equation}\label{droll 3D}
x_{9\ast} = \theta_{\ast} = \sin^{-1}\bigg(  \dfrac{u_{x}\cos(\psi_{\ast})-u_{y}\sin(\psi_{\ast})}{T_{1}cos(x_{5\ast})}   \bigg).
\end{equation}
\textit{Step2}. \textit{Pitch and roll angle control}: Once we have set the target pitch $\phi_{\ast}$ and roll $\theta_{\ast}$ angles obtained through the virtual control laws $u_{x}$ and $u_{y}$, the values of $\ell_{y}$, and $\ell_{x}$ can be regulated. In detail,
\begin{equation}\label{e1l_1} \nonumber
\upsilon_{1} = I_{c} \pmb{\bigg[}   e_{5} + k_{1}e_{6} - k^{2}_{1}e_{5} + k_{2}e_{6}  \pmb{\bigg]},
\end{equation}
\begin{equation}\label{l1 3D} \nonumber
\upsilon_{2} = I_{cy} \pmb{\bigg[}  e_{9} + k_{9}e_{10} - k^{2}_{9}e_{9} + k_{10}e_{10}  \pmb{\bigg]},
\end{equation}
where $k_{9}$ and $k_{10}$ are control gains with positive value. $I_{yy} \approx I_{cy}$ is the approximated inertial term around the $y$ axis. The error terms in (\ref{l1 3D}) are defined as follows:
\begin{equation}\label{e ux}
e_{9} = x_{9\ast} - x_{9} = \theta_{\ast} - \theta
\end{equation}
\begin{equation}\label{edot ux}
e_{10} = x_{10\ast} - x_{10} = \dot{\theta}_{\ast} - \dot{\theta}.
\end{equation}
\textit{Step3}. \textit{Yaw control}: The desired yaw angle $\psi_{\ast}$ is then imposed so that the UAV heading and direction of motion follows the target path in the $x-y$ plane. From the illustration in Fig.~\ref{yaw} and trigonometry, the desired yaw angle $\psi_{\ast}$ can be obtained as follows:
      \begin{equation}\label{dyaw}
        \psi_{\ast} =  \tan^{-1}\bigg(\dfrac{y_{\ast}-y}{ x_{\ast}-x}\bigg).
      \end{equation}
\begin{proof}
The proof of the derived control laws using the Lyapunov approach can be obtained similarly to the 2D case. In particular, for Step 1 the derivation follows the same procedure in Appendix \ref{FirstAppendix} and \ref{SecondAppendix}, while for Step 2 the one in Appendix \ref{ThirdAppendix}. Then, the control law in Step 3 follows straightforwardly.
\end{proof}

\begin{table}[h]
\renewcommand{\arraystretch}{1.3}
\caption{{Back-stepping controller tuning parameters} }
\label{Table-Ig}
\centering
\begin{tabular}{|c |c| c| c| c| c| c|c|}
\hline
\bfseries Parameter  & $k_{1}$ &  $k_{2}$ & $k_{3}$ &  $k_{4}$ &  $k_{5}$ &  $k_{6}$ & $\epsilon_{1}$   \\\hline
\bfseries Linear trajectory     & $0.2$ & $3$ & $0.2$ & $2$ & $0.2$ & $2$ & $0.1$ \\\hline
\bfseries Complex trajectory     & $5$ & $0.5$ & $1$ & $2$ & $1.6$ & $8$ & $0.2$ \\
\hline
\end{tabular}
\end{table}

\begin{center}
\begin{figure}[h]
\includegraphics[scale=0.6]{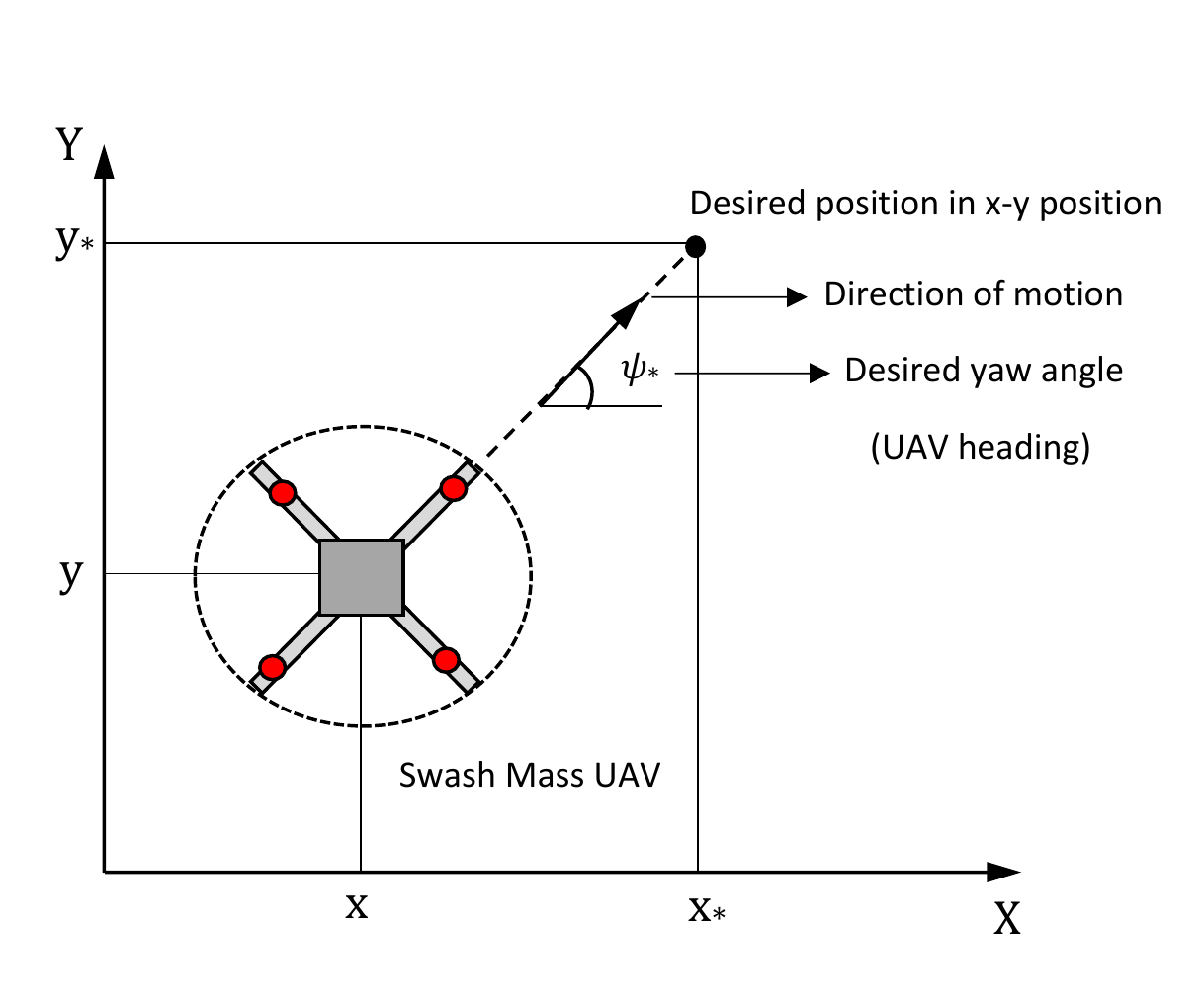}
\centering
\vspace{-0.1em}
\caption{Top view of the swash mass unmanned aerial vehicle structure, motion in x-y plane.}
\vspace{-0.5em}
\label{yaw}
\centering
\end{figure}
\end{center}

\section{numerical results}
In this section, numerical results are reported to evaluate the proposed control strategy for tracking given geometric trajectories to be followed by the UAV. Two flying scenarios are considered: a linear trajectory, and a complex trajectory. The total time of flight for both scenarios are set to $T_{f} = 10 s$ and $T_{f} = 14 s$, respectively, the sampling time is set to $T_{s} = 0.1 ms$, and the initial conditions are $\bm{x}_{init} = [0, 0, 0, 0, 0, 0]^{T}$. For simplicity of exposition, the 2D planner case is shown. The value of the controller tuning parameters are reported in Table \ref{Table-Ig}.

\subsection{Linear Trajectory}

We start by considering a linear trajectory. In detail, this is given by
\begin{align}\label{trajectory1}
y_{\ast}(t) = x_{1\ast}(t) = 0.857t, \\
z_{\ast}(t) = x_{3\ast}(t) = 0.857t,
\end{align}

\begin{figure}[t]
\includegraphics[scale=0.3]{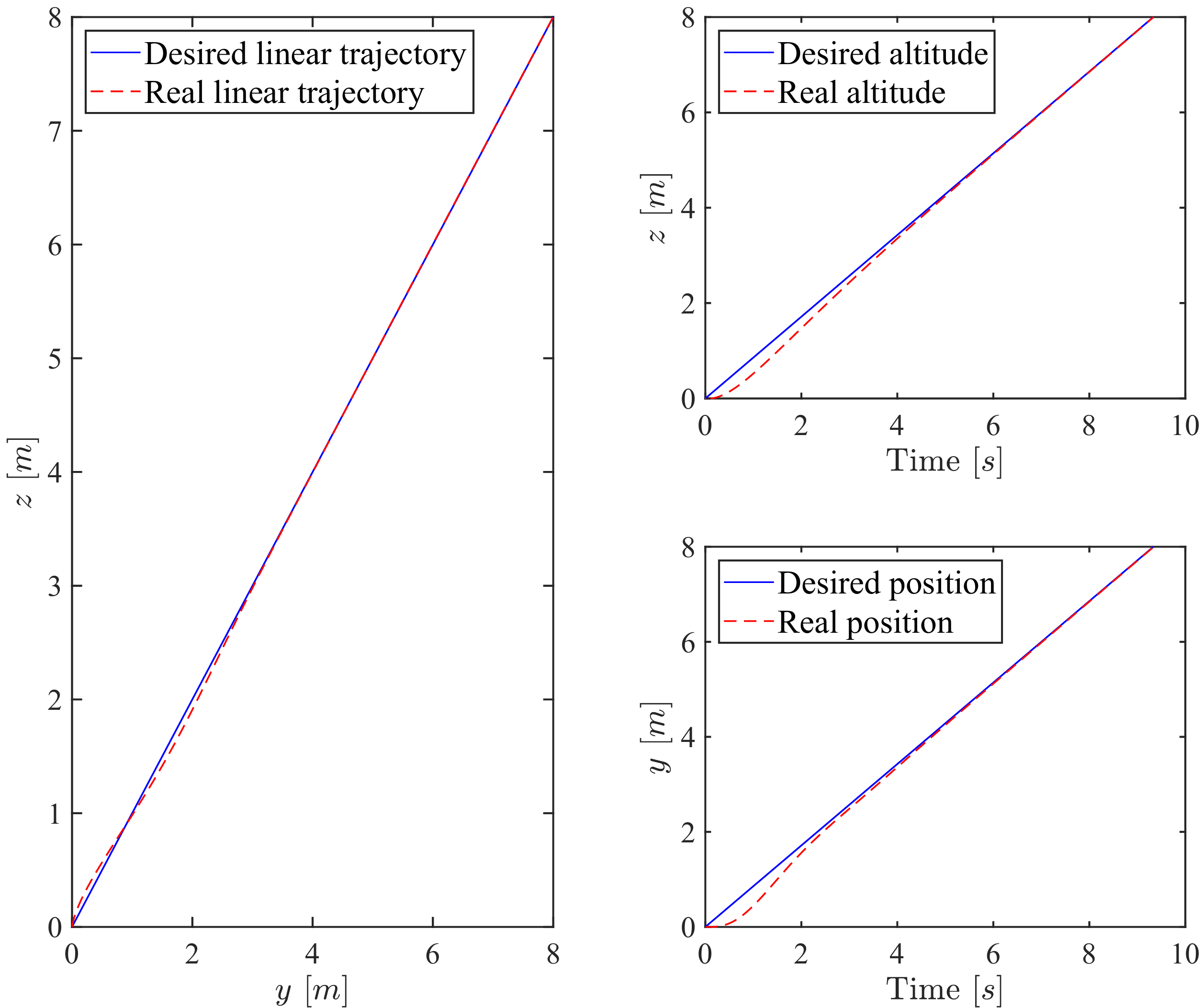}
\centering
\vspace{-0.1em}
\caption{Linear maneuver simulation.}
\vspace{-1.0em}
\label{Linear_trajectory}
\end{figure}
\begin{figure}[t]
\includegraphics[scale=0.3]{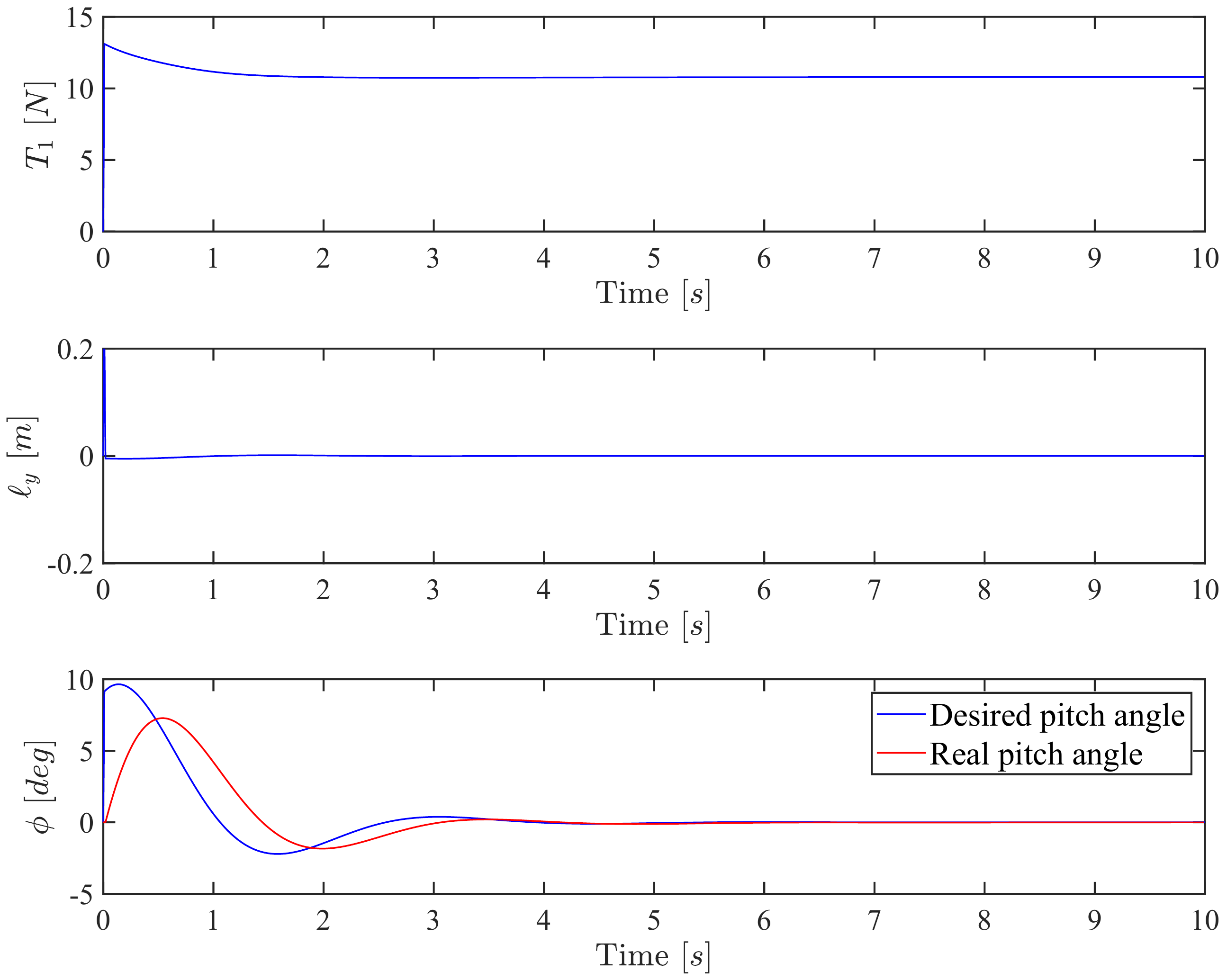}
\centering
\vspace{-0.1em}
\caption{Control inputs $\bm{u} = [T_{1},\hspace{0.5em} \ell_{y}]^{T}$ and pitch angle $\phi$ for the linear maneuver. }
\vspace{-1.0em}
\label{Linear_control_inputs}
\end{figure}

In Fig.~\ref{Linear_trajectory}, both the target and the real trajectories are shown and the difference is not pronounced. Quantitatively, in Table.~\ref{Table-rmse}, we report the root-mean-square-error (RMSE) between the desired trajectory components and the real trajectory components. The overall RMSE is equal to $0.3 m$.

The control inputs $\bm{u} = [T_{1},\hspace{0.5em} \ell_{y}]^{T}$ are reported in Fig.~\ref{Linear_control_inputs}. The target pitch angle obtained through the inverse of the virtual control input $u_y$ is shown in the third sub-plot of Fig.~\ref{Linear_control_inputs}. The actual (real) pitch angle of the UAV is also shown herein. The pitch angle oscillation is more pronounced at the beginning of the flight and then it converges to a constant value in steady state.

\subsection{Complex Trajectory}

We now consider a complex trajectory given by the relations
\begin{align}\label{trajectory2}
y_{\ast}(t) = x_{1\ast}(t) = 4\sin(0.5t), \\
z_{\ast}(t) = x_{3\ast}(t) = 5\sin(t).
\end{align}

\begin{figure}[t]
\includegraphics[scale=0.3]{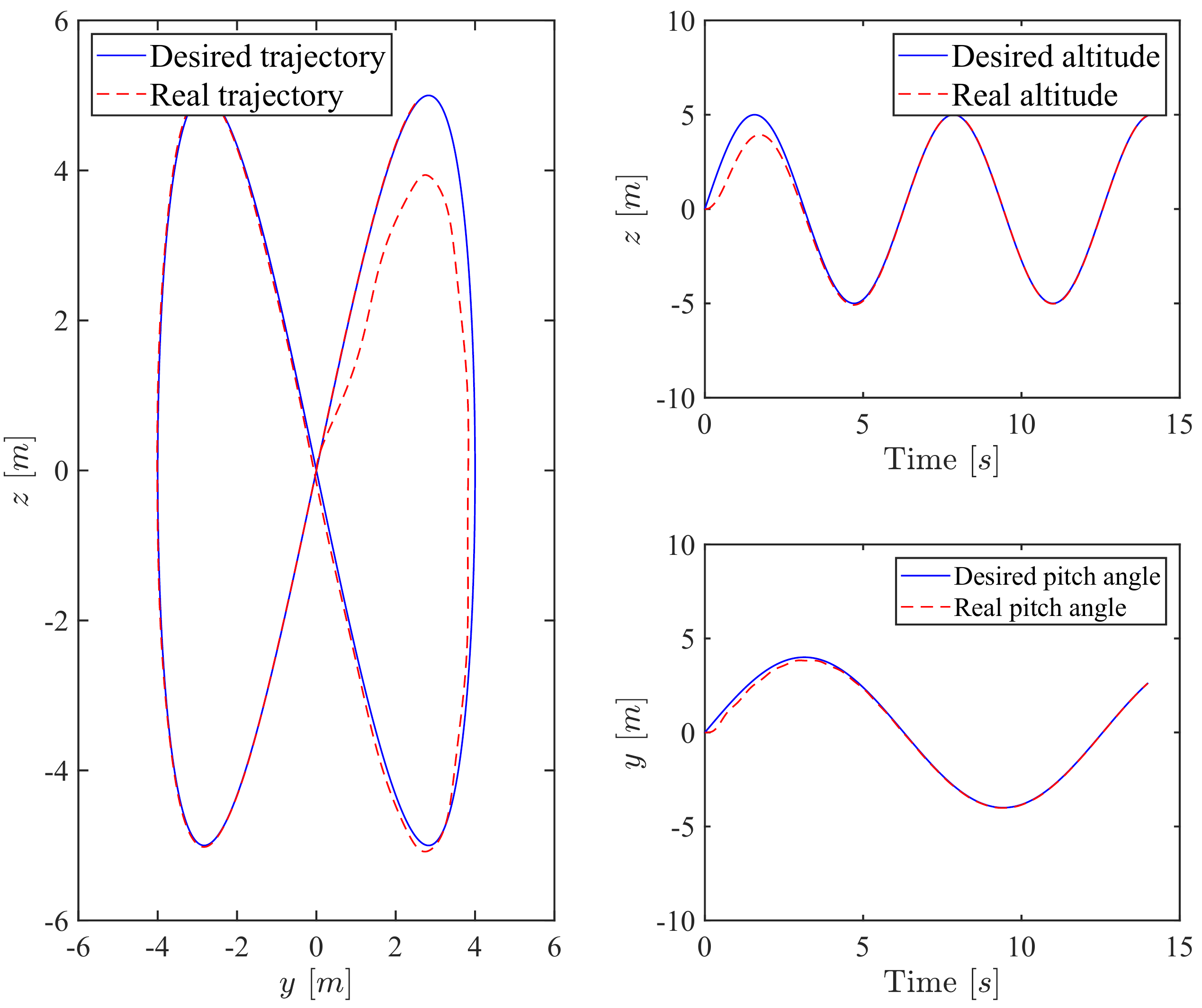}
\centering
\vspace{-0.1em}
\caption{Complex maneuver simulation.}
\vspace{-1.0em}
\label{Complex_trajectory}
\end{figure}
\begin{figure}[t]
\includegraphics[scale=0.3]{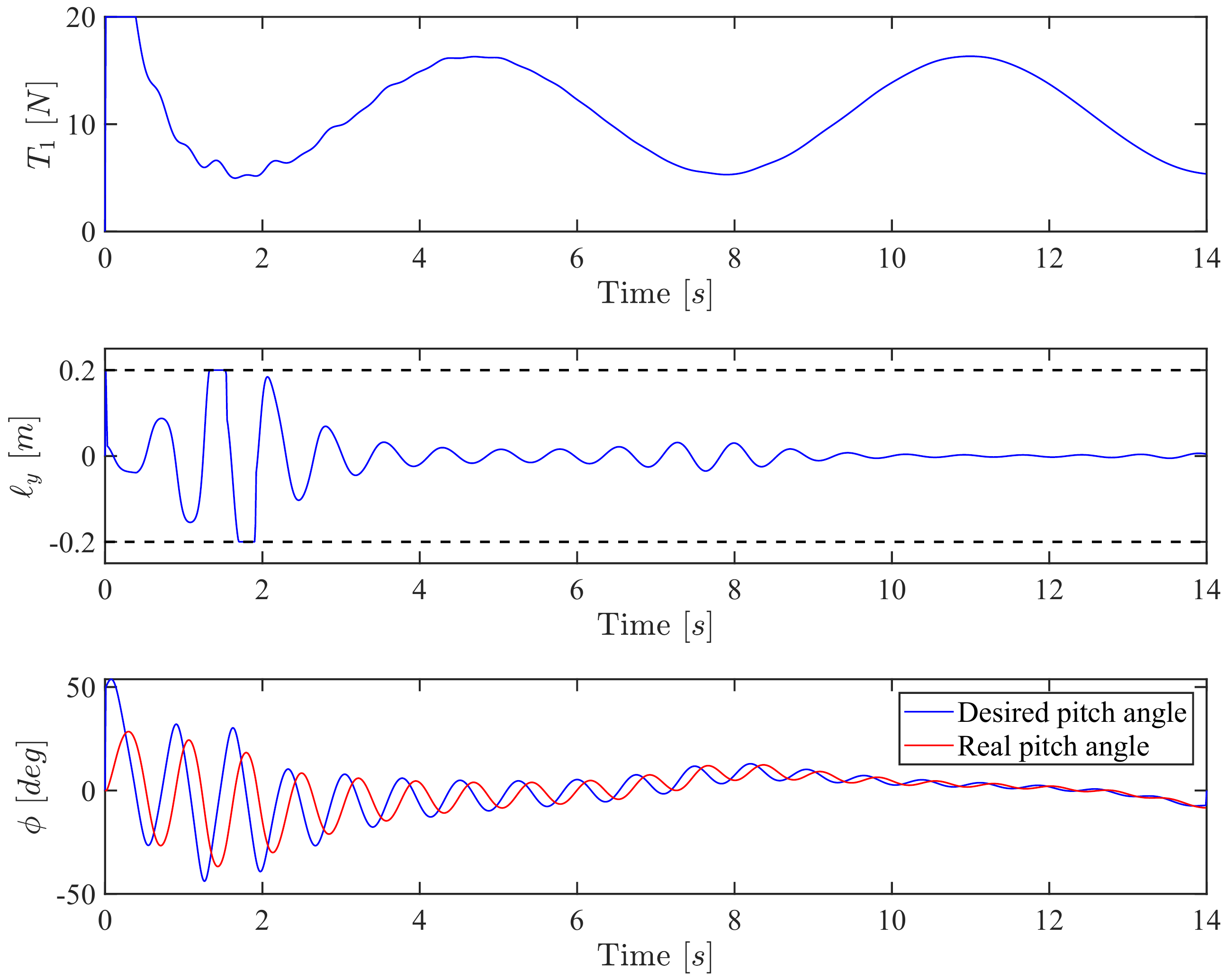}
\centering
\vspace{-0.1em}
\caption{Control inputs $\bm{u} = [T_{1},\hspace{0.5em} \ell_{y}]^{T}$ and pitch angle $\phi$ for the complex trajectory. }
\vspace{-1.0em}
\label{Complex_control_inputs}
\end{figure}

It simulates the behavior of the UAV in an aggressive maneuver, from a given orientation, upside down for the pitch angle $\phi$, and a strong lateral motion $y$ and altitude motion $z$.
The results are reported in Fig.~\ref{Complex_trajectory}. Despite the aggressiveness of the trajectory, the UAV is able to well follow it. The real-target trajectory RMSE is reported in Table~\ref{Table-rmse} and the overall RMSE equals $0.35 m$.

\begin{table}[h]
\renewcommand{\arraystretch}{1.3}
\caption{{Root-mean-square error between desired and real trajectory} }
\label{Table-rmse}
\centering
\begin{tabular}{|c|c|c|}
\hline
\bfseries Mission type  & \bfseries Value (m)   \\\hlineB{2.5}
\hline \hline
$RMSE_{y}$ for linear trajectory    & $0.2979$   \\\hline
$RMSE_{z}$ for linear trajectory    & $0.3102$\\\hline
$RMSE_{y}$ for complex trajectory  & $0.1507$ \\\hline
$RMSE_{z}$ for complex trajectory  & $ 0.5589$\\
\hline
\end{tabular}
\end{table}

The control inputs for the complex trajectory are shown in Fig.~\ref{Complex_control_inputs}. It is interesting to note that in the time window between $1 s$ and $3 s$ the control input $\ell_{y}$ is saturated to the value $\bar{\ell}_{y} = L = 0.2 m$. Furthermore, the evolution of the control inputs follows a dumped sinusoidal shape, and this is directly related to the target motion trajectory in the $y-z$ plane.

\section{conclusion}
We have presented a new unmanned aerial vehicle (UAV) structure that allows maneuvering the UAV through the control of four swash masses and the double blade rotor thrust. A dynamical system model has been derived from the Newton's laws. It has been shown that a steady state can be defined and it corresponds to the state where the swash mass UAV is at rotation equilibrium and follows a linear trajectory. We have then focused the attention to the design of an automatic control mechanism so that the UAV follows a certain target trajectory. The dynamical system of equations that describes the UAV dynamics is non-linear and rather unique. In fact, it consists of sub-sets of differential equations coupled through the control inputs. Furthermore, the inertia term is time dependent as a result of the swash masses movement, which introduces a dependency on both the controllable masses positions and theirs derivative.
A back-stepping control approach has then be derived considering physical constraints. Several results from simulations have been presented to assess the performance of the UAV that is maneuvered to follow both linear and aggressive trajectories. They show that the UAV can be controlled and such trajectories can be well followed with small RMSE.

\appendices
\section{Derivation of the Control Law for $T_{1}$}
\label{FirstAppendix}
To derive the control law for $T_{1}$, so that a target altitude is reached we start from equation (\ref{T_{1} control law}) and follow the Lyapunov methodology \cite{Khalil:1173048} and (\cite{zhou2008adaptive}, Chapter 2), to identify a suitable Lyapunov candidate function.
Firstly, the error term for the altitude $x_{3}=z$ is defined
\begin{equation}\label{error-z}
e_{3} = x_{3\ast}-x_{3}.
\end{equation}
A suitable candidate Lyapunov function is chosen as
\begin{equation}\label{L3}
V(e_{3}) = \frac{1}{2}\bigg(e^{2}_{3}\bigg).
\end{equation}
The first derivative of $V(e_{3})$ with respect to time is given by
\begin{equation}\label{L3}
\dot{V}(e_{3}) = e_{3}\dot{e_{3}} = e_{3}(\dot{x}_{3\ast}-\dot{x}_{3}).
\end{equation}
If $x_{4} = \dot{x}_{3\ast} + k_{3}e_{3}$, for $k_{3}>0$, then $ \dot{V}(e_{3})$ is negative semi-definite and the error term $e_{3}$ converges to zero. Thus, $x_{4\ast}$ is defined as
\begin{equation}\label{L33}
x_{4\ast} = \dot{x}_{3\ast} + k_{3}e_{3}, \hspace{1em} for \hspace{0.5em}k_{3} > 0.
\end{equation}
To get $x_{4\ast}$ in (\ref{L33}), we should be able to control $x_{4}$ which comes from the dynamics equation defined in (\ref{state-space-dynamic3}). Therefore, another error term $e_{4}$ can be defined
\begin{equation}\label{L3}
e_{4} = x_{4\ast} - x_{4} = \dot{x}_{3\ast} + k_{3}e_{3} - x_{4}.
\end{equation}
Since we want both the error terms $e_{3}$ and $e_{4}$ to converge to zero, an augmented Lyapunov function of $e_{3}$ and $e_{4}$ is chosen:
\begin{equation}\label{L34}
V(e_{3}, e_{4}) = \frac{1}{2}\bigg(e^{2}_{3} + e^{2}_{4}\bigg).
\end{equation}
By differentiating of $V(e_{3}, e_{4})$ with respect to time, we obtain
\begin{equation}\label{L344}
\dot{V}(e_{3}, e_{4}) = e_{3}\dot{e}_{3} + e_{4}\dot{e}_{4}.
\end{equation}
Now, we need $\dot{e}_{3}$ and $\dot{e}_{4}$ so that they can be replaced into (\ref{L344}). If we differentiate (\ref{error-z}) and considering (\ref{state-space-dynamic}), $\dot{e}_{3}$ is obtained:
\begin{equation}\label{L34}
\dot{e}_{3} = \dot{x}_{3\ast} - \dot{x}_{3} = \dot{x}_{3\ast} - x_{4}.
\end{equation}
Then
\begin{equation}\label{L341}
\dot{e}_{3} = e_{4} - k_{3}e_{3}.
\end{equation}
Furthermore, by differentiating (\ref{L3}), $\dot{e}_{4}$ is obtained
\begin{equation}\label{L345}
\dot{e}_{4} = \ddot{x}_{3\ast} + k_{3}\dot{e}_{3} - \dot{x}_{4}.
\end{equation}
By replacing (\ref{L341}) and (\ref{L345}) into (\ref{L344}), $\dot{V}(e_{3}, e_{4})$ results as
\begin{equation}\label{L3444}
\dot{V}(e_{3}, e_{4}) = e_{3}e_{4} - k_{3}e^{2}_{3} + e_{4}\ddot{x}_{3\ast} + k_{3}\dot{e}_{3}e_{4} - e_{4}\dot{x}_{4}.
\end{equation}
In (\ref{L3444}), $\dot{x}_{4}$ comes from the dynamical system defined in (\ref{state-space-dynamic2}) and (\ref{state-space-dynamic3}) and it can be replaced by the term $-g + \frac{\beta\Theta_{2}}{M} + \frac{T_{1}\cos(x_{5})}{M}$. The final form of $\dot{V}(e_{3}, e_{4})$ is obtained as
\begin{align}\label{L34444} \nonumber
  \dot{V}(e_{3}, e_{4}) = e_{3}e_{4} - k_{3}e^{2}_{3} + e_{4}\ddot{x}_{3\ast} + k_{3}e^{2}_{4}  \\
  - e_{4} \pmb{\bigg(} -g + \dfrac{\beta\Theta_{2}}{M} + \dfrac{T_{1}\cos(x_{5})}{M} \pmb{\bigg)}.
\end{align}

In (\ref{L34444}), $T_{1}$ represents the control force. By choosing $T_{1}$ properly, $\dot{V}(e_{3}, e_{4})$ becomes negative semi-definite and the error terms $e_{3}$ and $e_{4}$ converge to zero. Thus, a suitable control law for $T_{1}$ is chosen as
\begin{equation}\label{L3455}
T_{1} = \dfrac{M}{\cos(x_{5})} \pmb{\bigg(}    g - \dfrac{\beta\Theta_{2}}{M} + e_{3} + \ddot{x}_{3\ast} + k_{3}e_{4} - k^{2}_{3}e_{3} + k_{4}e_{4} \pmb{\bigg)},
\end{equation}
where $k_{3}$ and $k_{4}$ are control gains with positive value. In fact, by substituting (52) in (51), $\dot{V}(e_{3}, e_{4})$ becomes
\begin{equation}\label{L34}
\dot{V}(e_{3}, e_{4}) = -k_{3}e^{2}_{3} - k_{4}e^{2}_{4} < 0 ,  \hspace{1em} for \hspace{0.5em}k_{3} > 0 , \hspace{0.5em} k_{4} > 0.
\end{equation}

\begin{remark}
In (\ref{L3455}), the second order derivative of the desired altitude $\ddot{x}_{3\ast}$ is added to the control law to increase the tracking performance.
\end{remark}

\section{Derivation of the control law for $u_{y}$}
\label{SecondAppendix}
In this Appendix the virtual control input $u_{y}$ is formulated to obtain a desired pitch angle $x_{5\ast} = \phi_{\ast}$. Following the Lyapunov methodology, let us consider the error term
\begin{equation}\label{error-y}
e_{1} = x_{1\ast}-x_{1}.
\end{equation}
and consequently define the Lyapunov function
\begin{equation}\label{Ly}
V(e_{1}) = \frac{1}{2}\bigg(e^{2}_{1}\bigg).
\end{equation}
We now compute its time derivative to obtain
\begin{equation}\label{Ly1}
\dot{V}(e_{1}) = e_{1}\dot{e_{1}} = e_{1}(\dot{x}_{1\ast}-\dot{x}_{1}).
\end{equation}
If $x_{2} = \dot{x}_{1\ast} + k_{5}e_{1}$ for $k_{5}>0$, then $ \dot{V}(e_{1})$ is negative semi-definite and the error term $e_{1}$ converges to zero. To accomplish this we introduce
\begin{equation}\label{Ly2}
x_{2\ast} = \dot{x}_{1\ast} + k_{5}e_{1}, \hspace{1em} for \hspace{0.5em}k_{5} > 0.
\end{equation}
To get $x_{2\ast}$ in (\ref{Ly2}), we should be able to control $x_{2}$ which comes from the dynamics given by (\ref{state-space-dynamic1}). Therefore, another error term $e_{2}$ can be defined
\begin{equation}\label{Ly3}
e_{2} = x_{2\ast} - x_{2} = \dot{x}_{1\ast} + k_{1}e_{1} - x_{2}.
\end{equation}
Now, we want both the error terms $e_{1}$ and $e_{2}$ to converge to zero. Therefore, an augmented Lyapunov function of $e_{1}$ and $e_{2}$ is chosen
\begin{equation}\label{Ly4}
V(e_{1}, e_{2}) = \frac{1}{2}\bigg(e^{2}_{1} + e^{2}_{2}\bigg).
\end{equation}
By differentiating $V(e_{1}, e_{2})$ with respect to the time, we obtain
\begin{equation}\label{Ly5}
\dot{V}(e_{1}, e_{2}) = e_{1}\dot{e}_{1} + e_{2}\dot{e}_{2}.
\end{equation}
After straightforward algebraic manipulations, we get
\begin{equation}\label{Ly6}
\dot{e}_{1} = \dot{x}_{1\ast} - \dot{x}_{1} = \dot{x}_{1\ast} - x_{2},
\end{equation}
then
\begin{equation}\label{Ly7}
\dot{e}_{1} = e_{2} - k_{5}e_{1}.
\end{equation}
Furthermore, by differentiating (\ref{Ly3}) with respect to time, $\dot{e}_{2}$ becomes
\begin{equation}\label{Ly8}
\dot{e}_{2} = \ddot{x}_{1\ast} + k_{5}\dot{e}_{1} - \dot{x}_{2}.
\end{equation}
By replacing (\ref{Ly7}) and (\ref{Ly8}) into (\ref{Ly5}), $\dot{V}(e_{1}, e_{2})$ is obtained as follows
\begin{equation}\label{Ly9}
\dot{V}(e_{1}, e_{2}) = e_{1}e_{2} - k_{5}e^{2}_{1} + e_{2}\ddot{x}_{1\ast} + k_{5}\dot{e}_{1}e_{2} - e_{2}\dot{x}_{2}.
\end{equation}
In (\ref{Ly9}), $\dot{x}_{2}$ comes from the dynamical system of equations defined in (\ref{state-space-dynamic}) and (\ref{state-space-dynamic1}). It can be replaced by the term $ \frac{\beta\Theta_{1}}{M} + \frac{T_{1}\sin(x_{5})}{M}$. The final form of $\dot{V}(e_{1}, e_{2})$ is obtained as
\begin{align}\label{Ly10} \nonumber
  \dot{V}(e_{1}, e_{2}) = e_{1}e_{2} - k_{5}e^{2}_{1} + e_{2}\ddot{x}_{1\ast} + k_{5}e^{2}_{2}  \\
  - e_{2} \pmb{\bigg(} \dfrac{\beta\Theta_{1}}{M} + \dfrac{T_{1}\sin(x_{5})}{M} \pmb{\bigg)}.
\end{align}
In (\ref{Ly10}), $u_{y} = \sin(x_{5})$ represents the virtual control input. By choosing $u_{y}$ properly, $\dot{V}(e_{1}, e_{2})$ becomes negative semi-definite so that the error terms $e_{1}$ and $e_{2}$ converge to zero. Thus, a suitable control law for $u_{y}$ is chosen as
\begin{equation}\label{Ly11}
u_{y} = \dfrac{M}{T_{1}} \pmb{\bigg(} -\dfrac{\beta\Theta_{1}}{M} +  e_{1} + \ddot{x}_{1\ast} + k_{5}e_{2} - k^{2}_{5}e_{1} + k_{6}e_{2}   \pmb{\bigg)},
\end{equation}
where $k_{5}$ and $k_{6}$ are control gains with positive value. This is because, by substituting (\ref{Ly11}) in (\ref{Ly10}), $\dot{V}(e_{1}, e_{2})$ becomes negative:
\begin{equation}\label{Ly12}
\dot{V}(e_{1}, e_{2}) = -k_{1}e^{2}_{1} - k_{5}e^{2}_{2} < 0 ,  \hspace{1em} for \hspace{0.5em}k_{5} > 0 , \hspace{0.5em} k_{6} > 0.
\end{equation}

\section{Derivation of the control law for $\ell_{y}$}
\label{ThirdAppendix}
The control law for $\ell_{y}$ is formulated in this Appendix following a similar methodology to the one used to derive the control laws for $T_1$ and $u_y$. Motivated by (\ref{state-space-dynamic4}) and (\ref{state-space-dynamic5}), the pitch tracking error is expressed as
\begin{equation}\label{err_pitch}
e_{5} = x_{5\ast} - x_{5},
\end{equation}
where $ x_{5\ast}$ represents the desired value for the pitch angle $\phi$. We consider the Lyapunov candidate function
\begin{equation}\label{L51}
V(e_{5}) = \frac{1}{2}\bigg(e^2_{5}\bigg).
\end{equation}
Its time derivative is
\begin{equation}\label{L52}
\dot{V}(e_{5}) = e_{5}\dot{e_{5}} = e_{5}(\dot{x}_{5\ast}-\dot{x}_{5}).
\end{equation}
By using the dynamical system equations in (\ref{state-space-dynamic4}) and (\ref{state-space-dynamic5}), we can replace $\dot{x}_{5}$ by $x_{6}$ and (\ref{L52}) becomes
\begin{equation}\label{L53}
\dot{V}(e_{5}) = e_{5}(\dot{x}_{5\ast}-x_{6}).
\end{equation}
The stabilization of $e_{5}$ in (\ref{err_pitch}) can be satisfied by introducing $x_{6} = \dot{x}_{5\ast} + k_{1}e_{5}$. Then $\dot{V}(e_{5})$ is negative semi-definite for $k_{1}>0$. Because the fully actuated subsystem in (\ref{state-space-dynamic4}) and (\ref{state-space-dynamic5}) is a second order system we also want to stabilize the derivative of the pitch angle. Therefore, $x_{6\ast} =  \dot{x}_{5\ast} + k_{1}e_{5}$ for $k_{1}>0$. To stabilize the derivative of the pitch angle we need another error term which converges to zero. This error term is defined as follows
\begin{equation}\label{L54}
e_{6} = x_{6\ast} - x_{6}.
\end{equation}
To track $x_{5\ast}$ and $x_{6\ast}$, the error terms $e_{5}$ and $e_{6}$ should converge to zero. We consider therefore the augmented Lyapunov function
\begin{equation}\label{L55}
V(e_{5},e_{6}) = \frac{1}{2}\bigg(e^{2}_{5} + e^{2}_{6}\bigg).
\end{equation}
The time derivative of (\ref{L55}) is
\begin{equation}\label{L56}
\dot{V}(e_{5},e_{6}) = e_{5}\dot{e}_{5} + e_{6}\dot{e_{6}}.
\end{equation}
By taking (\ref{L55}) and $x_{6} = \dot{x}_{5\ast} + k_{1}e_{5}$ into account, $e_{6}$ is written as
\begin{equation}\label{L57}
e_{6} = \dot{x}_{5\ast} + k_{1}e_{5} - x_{6}.
\end{equation}
By replacing $\dot{e}_{5}$ with the term $(\dot{x}_{5\ast} - x_{6})$ in (\ref{L57}), then $e_{6}$ becomes
\begin{equation}\label{LL58}
e_{6} = \dot{e}_{5} + k_{1}e_{5}.
\end{equation}
By differentiating (\ref{LL58}) with respect to time, $\dot{e}_{6}$ is obtained as
\begin{equation}\label{LL59}
\dot{e}_{6} = \ddot{x}_{5\ast} - \dot{x}_{6} + k_{1}\dot{e}_{5}.
\end{equation}
By replacing (\ref{LL59}) into (\ref{L56}), $\dot{V}(e_{5},e_{6})$ is given by
\begin{equation}\label{LL510}
\dot{V}(e_{5},e_{6}) = e_{5}(e_{6}-k_{1}e_{5}) + e_{6}(\ddot{x}_{5\ast} + k_{1}\dot{e}_{5}-\dot{x}_{6}).
\end{equation}
In (\ref{LL510}), $\dot{x}_{6}$ comes from the dynamical equations (\ref{state-space-dynamic4}) and (\ref{state-space-dynamic5}). By replacing $\dot{x}_{6}$ with $ \frac{\beta T_{1}\cos(x_{5})\ell_{y}}{I_{c}}$ and with some simple calculations, finally we get
\begin{align}
\begin{split}\nonumber
\dot{V}(e_{5},e_{6}) = e_{5}e_{6} - k_{1}e^{2}_{5} + \ddot{x}_{5\ast}e_{6} + k_{1}e^2_{6}
\end{split}\\
\begin{split}\label{LL511}
 - k^{2}_{1}e_{5}e_{6} - e_{6}\pmb{\bigg(}  \dfrac{\beta T_{1}\cos(x_{5})\ell_{y}}{I_{c}}  \pmb{\bigg)} .
 \end{split}
\end{align}
In (\ref{LL511}), $\ell_{y}$ represents the control input. Now we should choose $\ell_{y}$ such that $\dot{V}(e_{5},e_{6})$ is negative semi-definite. Therefore, the control law of $\ell_{y}$ for the roll angle is chosen as
\begin{align}\label{el_1}
\ell_{y} {}&= \dfrac{I_{c}}{\beta T_{1}\cos(x_{5})} \pmb{\bigg(}  e_{5} + k_{1}e_{6} - k^{2}_{1}e_{5} + k_{2}e_{6}  \pmb{\bigg)},
\end{align}
where $k_{1}$ and $k_{2}$ are control gains with positive value. By substituting (\ref{el_1}) into (\ref{LL511}), $\dot{V}(e_{5},e_{6})$ becomes:
\begin{equation}\label{L511}
 \dot{V}(e_{5},e_{6}) = -k_{1}e^{2}_{5} - k_{2}e^{2}_{6}  \hspace{1em} for \hspace{0.5em}k_{1} > 0 , \hspace{0.5em} k_{2} > 0
\end{equation}
We can conclude that if the control law of $\ell_{y}$ is chosen as (\ref{el_1}), $\dot{V}(e_{5},e_{6})$ is negative semi-definite and the convergence of the error terms $e_{5}$ and $e_{6}$ to zero is fulfilled.

\begin{remark}
It should be observed that the errors $\bar{e}_{5}$ and $\bar{e}_{6}$ introduced in (\ref{modified_errors1}) and (\ref{modified_errors2}) converge asymptotically to zero since if we consider the Lyapunov function
\begin{equation}\label{L55new}
V(\bar{e}_{5},\bar{e}_{6}) = \frac{1}{2}\bigg(\bar{e}^{2}_{5} + \bar{e}^{2}_{6}\bigg),
\end{equation}
by substituting (\ref{update_auxiliary}) and (\ref{el_1new}) to its derivative,
\begin{equation}\label{L511}
 \dot{V}(\bar{e}_{5},\bar{e}_{6}) = -k_{1}\bar{e}^{2}_{5} - k_{2}\bar{e}^{2}_{6},
\end{equation}
this is negative for all positive parameters $k_{1}$ and $k_{2}$. Consequently, since also the auxiliary error goes to zero, the pitch angle error is also asymptotically null.
\end{remark}

\section{The decoupling matrix}
\label{ForthAppendix}
The elements $A_{i}, i = 1, ..., 9$, in (\ref{Decouple}) are
\begin{align}
\begin{split}\nonumber
A_{1} = \cos(\phi)\sin(\psi) - \cos(\psi)\sin(\phi)\sin(\theta)
\end{split}\\
\begin{split}\nonumber
A_{2} = \cos(\psi)\cos(\theta)
\end{split}\\
\begin{split}\nonumber
A_{3} = \sin(\phi)\sin(\psi) + \cos(\phi)\cos(\psi)\sin(\theta)
\end{split}\\
\begin{split}\nonumber
A_{4} = \cos(\phi)\cos(\psi) + \sin(\phi)\sin(\psi)\cos(\theta)
\end{split}\\
\begin{split}\nonumber
A_{5} = \cos(\theta)\sin(\psi)
\end{split}\\
\begin{split}\nonumber
A_{6} = -\cos(\psi)\sin(\phi) + \cos(\phi)\sin(\psi)\sin(\theta)
\end{split}\\
\begin{split}\nonumber
A_{7} = \cos(\theta)\sin(\phi)
\end{split}\\
\begin{split}\nonumber
A_{8} = \sin(\theta)
\end{split}\\
\begin{split}\nonumber
A_{9} = \cos(\phi)\cos(\theta).
\end{split}
\end{align}

\ifCLASSOPTIONcaptionsoff
  \newpage
\fi



%
\bibliographystyle{IEEEtran}
\bibliography{bib}
\end{document}